\documentclass[letterpaper,twocolumn,10pt]{article}
\usepackage{usenix}
\usepackage{tikz}
\usepackage{amsmath}
\usepackage{booktabs}
\usepackage{tabularx}
\usepackage{amssymb}
\usepackage{pifont}
\usepackage{graphicx}
\usepackage{filecontents}
\usepackage{xspace}
\usepackage{hyperref}
\usepackage{seqsplit}
\usepackage{float}
\usepackage{algorithm}
\usepackage{algpseudocode}
\usepackage{multirow}
\usepackage{todonotes}
\usepackage{balance}
\usepackage{xurl}
\usepackage{listings}
\usepackage{xcolor}
\usepackage{enumitem}
\usepackage{siunitx}
\usepackage{booktabs}
\usepackage{graphicx}
\usepackage{cuted}
\usepackage{tabularx}
\usepackage{booktabs}
\usepackage{caption}

\hypersetup{
    colorlinks=true,  
    linkcolor=blue,  
    urlcolor=blue, 
    citecolor=blue 
}
\newcommand{\sysname}{\textsc{EVAge}\xspace}
\newcommand{\eat}[1]{}

\newcommand{\cy}{\todo[inline,author=Chenyuan,color=brown]}

\newcommand{\stitle}[1]{\vspace{1.2ex}\noindent{\bf #1}}
\newcommand{\etitle}[1]{\vspace{1ex}\noindent{\em\underline{#1}}}
\newcommand{\tbf}{{\textcolor{red}{XX}\xspace}}

\begin{document}

\title{\Large \bf{\sysname: Autonomous MEV Generation and Adaptation via Multi-Agent Harness}}


\author{
{\rm Yan Wen\thanks{These authors contributed equally to this work.}}\\
{\normalsize City University of Hong Kong}
\and
{\rm Zichun Cai\footnotemark[1]}\\
{\normalsize City University of Hong Kong}
\and
{\rm Iliya Mirzaei}\\
{\normalsize Stony Brook University}
\and
{\rm Xiaohua Cai}\\
{\normalsize Tsinghua University}
\and
{\rm Mohammad Javad Amiri}\\
{\normalsize Stony Brook University}
\and
{\rm Haoxian Chen}\\
{\normalsize Shanghai Tech University}
\and
{\rm Chenyuan Wu}\\
{\normalsize City University of Hong Kong}
}

\maketitle

\begin{abstract}
Maximal Extractable Value (MEV) has evolved into a major economic force in blockchain ecosystems, yet its capture is dominated by experienced teams, and both strategy design and implementation rely on manual expert work that scales poorly across heterogeneous protocols and chains.
We present \sysname, the first fully autonomous multi-agent framework for end-to-end MEV strategy generation and adaptation. Equipped with three specialized operation modes, it automatically discovers novel MEV variants, adapts execution logic across disparate protocols, and ports strategies between chains, including Layer-1 and Layer-2 networks. To avoid inference latency on the critical MEV execution path, \sysname generates and refines MEV bot code offline rather than making real-time decisions directly. Under the coordination of an orchestrator agent, three specialized subagents collectively implement and repair the full MEV bot workflow via closed-loop diagnostics, eliminating human intervention while producing validated and deterministic Proof-of-Concept implementations. We evaluate \sysname on over 1.5M blocks from each of Ethereum, Base, and BNB Smart Chain (BSC). On Ethereum, \sysname uncovers five novel MEV strategy variants, yielding a profit increase of 1.02$\times$ to 15.97$\times$. It also successfully adapts 11 MEV strategies from CPMM to both CLMM and Balancer V2 and ports strategies from Ethereum to Base and BSC, all with less than $60$ dollars in LLM token costs.
\end{abstract}
\section{Introduction}

Maximal Extractable Value (MEV) has emerged as a prominent method of profit generation on  blockchains~\cite{torres2021frontrunner,daian2020flashboys,zhou2021just}.
On Ethereum, MEV enables searchers, builders, and validators to capture billions of dollars annually~\cite{qin2022quantifying,mclaughlin2023large,weintraub2022flash,wu2023defiranger,li2023demystifying}.
When MEV becomes large relative to block rewards, it can even destabilize consensus, e.g., by incentivizing chain reorganizations. Given these risks, substantial efforts have been made toward MEV mitigation in academia~\cite{kavousi_et_al:LIPIcs.OPODIS.2025.36,cryptoeprint:2026/1195,jiang2025armm,308144}, whereas the Ethereum community instead favors facilitating MEV extraction in an efficient, decentralized, and transparent manner rather than eliminating it~\cite{yang2023sokmevcountermeasurestheory}. 

In practice, block producers typically do not actively identify extractable value opportunities themselves. Instead, this task is outsourced to {\em searchers}, i.e., specialized traders running automated algorithms that continuously monitor the public mempool~\cite{owocki_mev_2026}.
Upon detecting a profitable opportunity among pending transactions, searchers incentivize block producers, via elevated transaction fees or direct bribes, to prioritize and order their transactions favorably within the block \cite{binance_academy_mev_2026}.

Despite the growth of this ecosystem, MEV extraction faces three major practical issues. First, MEV revenue accrues disproportionately to institutional players due to the substantial computational resources and technical expertise required to capture MEV~\cite{carlsten2016instability}. This concentration of MEV extraction fundamentally conflicts with the decentralization and transparency that blockchains are meant to provide~\cite{yang2023sokmevcountermeasurestheory,buterin2021proposer,hasu2022whyrunmevboost}. Experienced quantitative teams dominate this space, writing sophisticated scripts to monitor unconfirmed transactions in the public mempool using high-frequency on-chain trading strategies~\cite{luo2025rich}, identify arbitrage opportunities across liquidity pools~\cite{flashbots2020frontrunning,wang2021cyclic}, and backrun large swaps to capture transient price dislocations~\cite{qin2022quantifying}. As a result, ordinary users are structurally disadvantaged under this centralized MEV paradigm, bearing extraction costs without the technical expertise to compete for the opportunities themselves.

Second, existing MEV development practices do not scale well to the customized mathematical models underlying modern DeFi protocols. To accurately calculate execution slippage and optimize trade routing, searchers must meticulously work on each new protocol's logic. For instance, simulating arbitrary trades on Uniswap V3, a Concentrated Liquidity Market Maker (CLMM), requires precise off-chain reconstruction of tick map distributions, fee tiers, and intricate tick-crossing mechanisms~\cite{adams2021uniswap}. Manual scripts often struggle to handle this state space, resulting in transaction reverts and substantial gas losses.
Likewise, in Balancer V2, transactions must interface with a unified Vault rather than separate contracts for each trading pair, decoupling asset management from the protocol's customized weighted math~\cite{martinelli2021balancerv2}. Searchers must carefully validate off-chain simulations before constructing the \texttt{BatchSwapStep} structs required for \texttt{batchSwap} execution~\cite{balancer_batchswaps}. Mastering such protocol-specific mathematics and complex interfaces is non-trivial, yet these challenges are rarely addressed in the literature, let alone for the long tail of less popular protocols across modern blockchains~\cite{bingx_humidifi, euler_swap, bunni_v2}. By comparison, Ladóczki et al.~\cite{ladoczk2026does} exclusively examine Layer-1 Constant Product Market Makers (CPMMs), explicitly omitting CLMMs, while Gogol et al.~\cite{gogol2025priority} likewise illustrate only CPMM dynamics. The challenge of tailoring MEV extraction strategies to such diverse and customized protocols thus remains understudied, even as the complexity of the DeFi ecosystem continues to grow rapidly.

Finally, manual scripting introduces severe adaptation latency when migrating MEV strategies across different chains. The launch of new blockchains and Layer-2 rollups creates a brief ``vacuum period'' characterized by low competition and exceptionally high MEV profitability. For instance, MEV searchers adapted to zkSync within just one month of its developer release~\cite{ferreira2024rolling}. A similar case is the race to integrate Arbitrum's new TimeBoost sequencing auction~\cite{messias2025express}. Delayed deployment is costly for searchers, risking the loss of these critical early-mover advantages. How to rapidly and dynamically adapt MEV strategies to new environments was a practical problem long overlooked by the academic community.

To address the above MEV deployment and implementation challenges, we propose a fully autonomous MEV generation and adaptation system, \sysname, enabled by recent advances in AI reasoning and coding. Designing an effective AI agent framework for MEV generation, however, presents several unique challenges.
First, the high inference latency of agents conflicts with the latency-sensitive nature of MEV execution. Second, MEV generation requires exploring an unstructured and dynamically evolving search space. Unlike smart contract vulnerability exploitation, where agents operate within well-defined, constrained sandboxes~\cite{gervais2026aiagentsmartcontract}, profitable MEV strategy variants can rarely be specified in advance.
The continuous emergence of distinct DeFi protocols and chains further requires the system to avoid any environment-specific assumptions. Third, modern LLMs are prone to hallucinations, while generated runnable Proof of Concepts (PoCs) must be correct and deterministic. Directly executing AI-generated code may expose users to substantial financial losses.

\sysname addresses these challenges through a multi-agent orchestration harness~\cite{li2026agent}. To avoid inference latency on the critical MEV execution path, agents generate and refine MEV bot code offline rather than make real-time decisions directly by themselves. 
To assist agents in exploring the unstructured MEV space, we decompose the MEV bot workflow into distinct stages and assign clear responsibilities to specialized subagents, coordinated by a tailored closed-loop harness. The harness routes operation control among agents, performs multi-level evaluation and precise failure attribution, and returns targeted diagnostic feedback to the responsible agent.
To ensure safe verification and mitigate hallucination risks, we decouple the agentic reasoning layer from the MEV bot's deterministic workflow. Rather than allowing an LLM to generate an entire bot from scratch, each agent's intervention is confined to a narrow MEV workflow stage determined by its specialized role (Fig.~\ref{fig:agent_loop_full}). After each modification, the generated MEV bot must be simulated by the validator, a deterministic caller of anvil that operates on a forked block state. Agents do not directly produce the final runnable PoC; instead, the PoC is strictly derived from the validator's concrete execution trace. Consequently, the validator can intercept any hallucinated logic before it propagates to a live deployment.

Through this design, \sysname shifts AI agents from reactive analysis tools to proactive MEV bot generators, reducing the engineering burden and implementation barriers to MEV extraction. \sysname supports three core functionality modes. In variant explorer mode, it discovers new MEV variants and improves existing strategies. In protocol adaptation mode, it adapts MEV workflows to unseen AMMs with different mathematical models and interfaces. In cross-chain adaptation mode, it migrates strategies across heterogeneous execution environments, including Ethereum, Base, and BNB Smart Chain (BSC). Our key contributions are:

\begin{itemize}
\item \textbf{Autonomous MEV generation.} To our knowledge, \sysname is the first agentic system to autonomously generate MEV bots without human intervention, validated through extensive experiments on over $1.5$ million historical blocks. \sysname autonomously synthesizes insights from existing literature, generates end-to-end implementations of strategies and their variants, and adapts them to new protocols and alternative blockchains. It shifts MEV strategy maintenance and upgrades from a manual engineering bottleneck to a scalable, agentic paradigm.

\item \textbf{Hierarchical multi-agent harness for MEV.} We propose a novel multi-agent orchestration framework that strictly decouples the MEV bot into three specialized roles (Collector, Builder, and Validator; Fig.~\ref{fig:agent_loop_full}), significantly reducing each agent's context burden and mitigating hallucinations. By combining closed-loop diagnostics with goal-oriented natural-language handoffs, our architecture generates MEV bot implementations automatically while guaranteeing economic safety.

\item \textbf{Cross-protocol and cross-chain adaptation.} \sysname enables seamless cross-protocol and cross-chain migration. Its modular MEV bot architecture isolates most protocol- and chain-specific changes to the collector and validator, allowing the core planner logic to be reused despite divergent on-chain data models and execution environments.
Accordingly, the adapted implementations achieve 18.13\% average structural code similarity, concentrated primarily in the planner.

\item \textbf{Discovery of novel MEV variants.} We identify five novel MEV strategy variants for community reference, demonstrating \sysname's proficiency in massive search-space exploration and variant parameter optimization. We highlight that \sysname independently provides a mathematical proof of SBA-HFT's superiority (Appendix~\ref{app:SBA-HFT_theoretical_proofs}). In our experiments, the discovered variants increase revenue by $1.02\times$ to $15.97\times$ relative to their corresponding baseline strategies.

\item \textbf{Prototype implementation and comprehensive evaluation.} We implement a prototype of \sysname and evaluate it on over $1.5$ million historical blocks collected from February to August 2025. Our evaluation covers 11 MEV strategies on Ethereum, six on BNB Smart Chain, and two on Base. The results demonstrate profitable MEV execution, practical token costs, low planning latency, and the effectiveness of the multi-agent harness.
\end{itemize}
\section{Background}
\label{Sec:Background}

\stitle{AMM.}
Decentralized Finance (DeFi) operates on blockchain networks via smart contracts, which are deterministic state-transition functions running on virtual machines. Network computing resources are allocated through a gas fee mechanism. Following the EIP-1559~\cite{eip1559} upgrade, this pricing model includes a base fee and a priority fee, the latter of which frequently triggers Priority Gas Auctions as users and bots compete for favorable transaction sequencing \cite{daian2020flashboys}. Instead of relying on traditional order books, DeFi heavily utilizes Automated Market Makers (AMMs) to facilitate continuous, permissionless trading through liquidity pools. The most prevalent AMM model is the Constant Product Market Maker (\textbf{CPMM}). Assuming a token pair $(X, Y)$ traded on a CPMM with reserves $(x, y)$ and fee $f \in [0, 1)$, where $x$ and $y$ represent the reserve balances of the two tokens $(X, Y)$ in the liquidity pool. If a trader swaps $q > 0$ units of $X$ with the CPMM pool, the trader receives~\cite{gogol2025priority}:

$$ \Delta y(q) = \frac{y(1 - f)q}{x + (1 - f)q} $$
\label{Formula:CPMM}

and the reserves in the CPMM pool update to $(x + (1 - f)q, y - \Delta y(q))$. Under the CPMM pricing model, large trades trigger severe price impact and slippage. This pricing mechanism allows malicious actors to predict markets for exploitation.
Another important AMM is the Concentrated Liquidity Market Maker (\textbf{CLMM}), which replaces a single global reserve curve with concentrated liquidity bounded by discrete price ranges known as \emph{ticks}. At any given tick $t$, the raw token price is proportional to $1.0001^t$.
For computational efficiency on-chain, the system tracks the square root of the price, $\sqrt{P}$, where $P$ is the relative price of the two tokens. 

During an intra-tick trade of size $q$ with fee factor $f$, the virtual liquidity $L = \sqrt{x \cdot y}$ remains constant, and the price square root transitions from $\sqrt{P}$ to $\sqrt{P'}$:
$$\sqrt{P'} = \frac{L \sqrt{P}}{L+(1-f) q \sqrt{P}}$$
The trader receives a corresponding token output of:
$$F(q) = L(\sqrt{P}-\sqrt{P'})$$
\label{eq:clmm-output}
These formulas apply only within the current tick interval. If a large trade pushes the price across the boundary into a new tick $t$, the pool must adjust its active liquidity based on the funds in that next bracket. Let $\mathrm{LiquidityNet}(t)$ denote the net liquidity added to or removed from the pool at the boundary of tick $t$; the new active liquidity is:$$L_{\mathrm{next}} = L_{\mathrm{current}} - \mathrm{LiquidityNet}(t)$$

\paragraph{Maximal Extractable Value (MEV).}
Maximal Extractable Value (MEV)~\cite{piet2022extracting,torres2021frontrunner,luo2026light,materwala2025maximal,zhang2023your} refers to the excess profit that block producers or third-party searchers can extract by manipulating the inclusion, exclusion, or ordering of transactions within a block. We focus on two primary MEV strategies: sandwich attack and arbitrage.
A {\em sandwich attack} occurs when an attacker spots a victim's pending large trade in the public mempool. The attacker frontruns the trade to artificially inflate the asset's price, causing the victim's trade to execute at (or near) its maximum acceptable slippage, then immediately backruns it to sell the asset at a risk-free profit~\cite{qin2022quantifying}.
{\em Arbitrage} is a MEV strategy in which searchers exploit temporary price discrepancies across different liquidity pools or decentralized exchanges. By executing atomic, cross-market trades, arbitrageurs profit while simultaneously helping to enforce price consistency across fragmented markets.

\stitle{MEV workflow.}\label{Sec:background_MEV_workflow}
While recent studies, e.g., APOLLO~\cite{luo2026light}, examine the broader lifecycle of MEV bots, such as a bot's market entry or exit, we instead focus on the direct mechanics of how a bot constructs and executes a single profitable trade. Consider a sandwich bot monitoring the Ethereum mempool as an example. When it spots a pending large user swap (e.g., USDC $\rightarrow$ WETH) that will push up the WETH price in a Uniswap pool, the bot quickly builds an execution plan: it borrows capital via a flashloan to front-run the swap by buying WETH first (e.g., 10 WETH), then plans to sell it back at the elevated price right after the user's trade executes. To finalize the plan, the bot sets precise parameters, calldata, slippage tolerance, and gas price, while weighing the risk that the user's transaction fails or a rival bot wins the same opportunity. Finally, it submits a bundle directly to a block builder, offering a share of the profit as a priority fee. This guarantees the builder executes the transactions atomically and in order: the bot's buy, then the user's swap, then the bot's sell, immediately followed by repayment of the flashloan.

\eat{
\stitle{Layer 1 vs. Layer 2.}
Unlike monolithic Layer 1 (L1) networks, which handle execution, consensus, and data availability together, Layer 2 (L2) rollups decouple these functions to achieve greater scalability. L2 networks typically rely on a centralized sequencer with a private mempool to order transactions and construct blocks \cite{ferreira2024rolling}. This design hides pending transactions from the public network, removing the visibility that traditional sandwich attacks require and pushing MEV searchers toward "optimistic MEV" strategies~\cite{solmaz2025optimistic, wang2026blockspace, gogol2025priority} based on high-frequency, speculative transactions. 
We focus our experiments on Base, which offers a particularly specialized MEV environment due to its unique block-building design. Base implements "Flashblocks," splitting the standard 2-second block time into 200-millisecond partial blocks, each governed by a localized priority fee auction \cite{wang_flashblocks_2025}. Once a Flashblock is built, its transaction ordering is permanently fixed, tightening the timing and execution constraints on MEV strategies. Despite these tight latency constraints, Base's near-zero transaction fees keep continuous, high-frequency arbitrage highly profitable.
}

\section{\sysname Functionality Overview}
Traditionally, the discovery and implementation of MEV strategies have been time-consuming and severely constrained by manual effort~\cite{qin2022quantifying,qin2023imitation,zhang2023your}. Developers struggle to manually maintain codebases across rapidly evolving DeFi protocols, increasingly complex economic mechanisms, and heterogeneous chains. \sysname addresses these limitations by transitioning from previous reactive approaches to a \textit{proactive} agentic pipeline, supporting three core functionalities: novel strategy exploration, protocol adaptation, and cross-chain adaptation. As illustrated in Fig.~\ref{fig:architecture_overview}, \sysname decomposes the MEV Bot workflow into three stages: the {\em Data Collector} that processes raw on-chain data and filters it down to actionable, exploitable opportunities; the {\em MEV Planner} that executes the core MEV logic, drafting concrete attack plans from these opportunities; and the {\em Validator} that assembles the corresponding transactions and submits the resulting MEV bundles to an upstream sequencer or builder.

Users can activate one functionality mode at a time by specifying a high-level goal to the orchestrator when launching \sysname (Fig.~\ref{fig:agent_loop_full}). Under different modes, the orchestrator will reconfigure the subagents and their feedback loops to adapt the relevant workflow stages. As indicated by the blue arrows in Fig.~\ref{fig:architecture_overview}, the MEV variant explorer mode primarily modifies the MEV planner to develop novel MEV logic, while the protocol adaptation mode adjusts the data collector to accommodate new protocol semantics and interfaces. We discuss the \sysname design in more detail in Sec.~\ref{sec:system}.

\begin{figure}[t]
    \centering
    \includegraphics[width=1\linewidth, trim=7cm 6cm 6cm 5cm, clip=true]{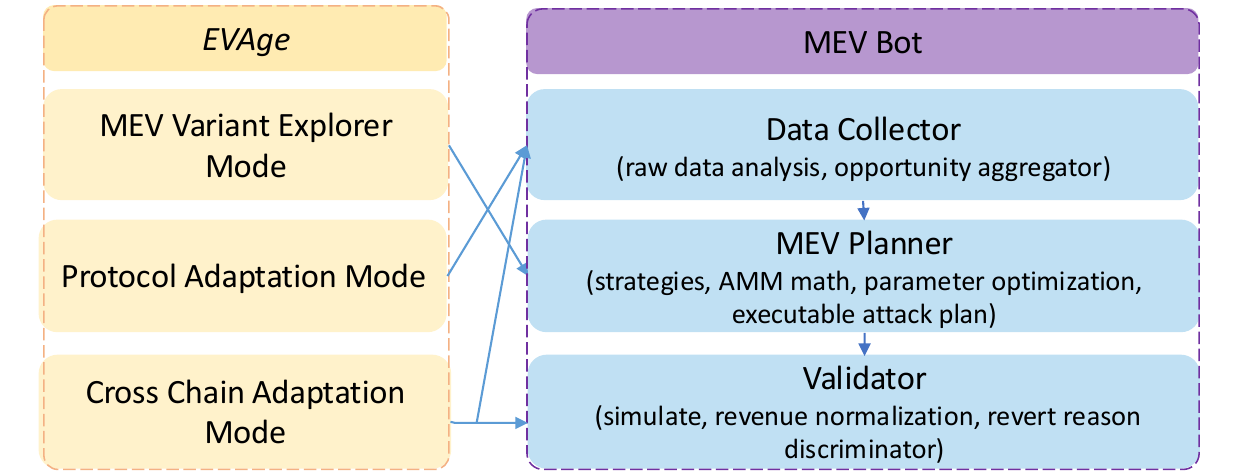}
    \caption{\sysname functionality overview. The three blue blocks on the right represent the stages of the MEV Bot workflow: data collector, planner, and validator. \sysname supports three complementary modes without human intervention: (1) \textbf{MEV Variant Explorer Mode} primarily targets the planner to synthesize novel strategy logic; (2) \textbf{Protocol Adaptation Mode} mainly focuses on adapting the collector to translate AMM semantics; and (3) \textbf{Cross-Chain Adaptation Mode} primarily adjusts the data collector to heterogeneous chains and makes sure the validator fits the target chain. The blue arrows indicate the MEV bot stages modified by each mode.}
\label{fig:architecture_overview}
\end{figure}

\stitle{MEV Variant Explorer.}
Relying on human intuition to discover new MEV strategies is slow and leaves promising regions of the vast strategy design space unexplored. To enable systematic exploration, \sysname introduces a variant explorer mode that synthesizes insights from existing literature to autonomously generate and evaluate novel MEV strategies. Rather than simply applying known methods to new environments, this mode focuses strictly on uncovering genuinely new attack opportunities. Our results (Sec.~\ref{Sec_Experiments: Variants}) show that \sysname discovers new effective strategies, most commonly through parameter optimization. 

\stitle{Protocol Migration.}
In the early stages of DeFi, MEV development was relatively straightforward, largely due to the structural homogenization of the market. As highlighted by~\cite{xu2023sok}, the majority of top AMMs, including SushiSwap, PancakeSwap, and QuickSwap, were fundamentally clones of the CPMM, retaining its core mathematical framework with only minor structural adjustments~\cite{angeris2021analysisuniswapmarkets}. Engineers face little pressure to adapt to new models, since liquidity was always distributed uniformly under the CPMM.
However, as the DeFi ecosystem has evolved, highly customized protocols with sophisticated mechanisms have proliferated in recent years~\cite{adams2021uniswap, martinelli2021balancerv2, adams2023uniswap, bingx_humidifi, euler_swap, bunni_v2}, making rapid adaptation to new protocol models a growing challenge for MEV searchers. Taking CLMMs as an example, liquidity is concentrated within specific price ranges rather than distributed uniformly, so each swap may traverse multiple ticks with different liquidity levels. Accurately calculating the output of a multi-hop arbitrage now requires navigating a highly fragmented state space.

\sysname mitigates this adaptation and maintenance burden via a fully automated protocol migration mode. It separates the core profit-making logic (e.g., temporarily providing liquidity to earn transaction fees) from its underlying implementation. In \sysname, specialized subagents alter the data collector (Fig.~\ref{fig:architecture_overview}) to represent pool states for different protocols, update the planner to handle heterogeneous victim constraints such as slippage tolerances and execution deadlines, and modify the validator to encode calldata, including function selectors and routing arrays. Consequently, this design preserves MEV strategy originality, while enabling their implementations to be systematically adapted across protocols. We evaluate \sysname's adaptation capability across two protocols, covering 11 strategies (Sec.~\ref{sec:rq2-protocol}). In our experiments, \sysname autonomously introduces tick-coverage frontiers to address the unique constraints of tick maps when full state snapshots cannot be reconstructed. It also gracefully adapts shared weighted-math modules to manage vault-mediated asset routing in Balancer V2. Across these adaptations, we find the code for core planner logic is consistently reused, substantially reducing implementation effort and token usage.

\stitle{Chain Adaptation.}
While it is vital to compete for the ``vacuum period'' following a new blockchain's launch, scaling an MEV operation from Ethereum to alternative Layer-1s or Layer-2 rollups is challenging. At the systems level, blockchain networks differ significantly in how they structure blocks and process transactions; at the economic level, they employ distinct fee mechanisms and auction models. For instance, Base charges separate L2 execution fees and L1 data availability gas fees~\cite{gogol2025priority}, and relies on a centralized sequencer with a private mempool. Manually porting and implementing separate MEV strategies for each new environment incurs substantial engineering and operational overhead.

To support automated deployment across different blockchains, \sysname introduces a cross-chain adaptation mode that preserves the core trading logic while adapting the underlying implementation details to each target blockchain. Specifically, the system dynamically adjusts how it reads market data, interacts with on-chain exchanges, calculates network fees, and constructs the final transactions for the target chain. We evaluate the chain adaptation mode on Base and BSC (Sec.~\ref{sec:rq3-chain}). Our experiments show that repeatedly adapting different strategies to the same chain incurs only marginal additional cost, since agents tend to reuse existing codebase.

\eat{\stitle{Applications.}}

\eat{
By lowering the engineering cost of strategy development and adaptation,
\sysname can support smaller searchers and DAOs in studying or internalizing
MEV. Protocol developers can also use it for adversarial testing on forked
states to identify harmful economic interactions before deployment. Finally,
its validated plans and execution traces provide a reproducible benchmark for
studying AI agents in DeFi.
}
\eat{(1) \sysname lowers the cost for smaller searchers and DAOs to study or
internalize MEV; (2) developers can test harmful economic interactions on
forked states before deployment; and (3) its traces benchmark AI agents in
DeFi.}
\eat{(1) MEV extraction is concentrated among teams that can maintain
protocol-specific searcher infrastructure~\cite{yang2023sokmevcountermeasurestheory};
by reducing this engineering burden, \sysname can help smaller searchers and
DAOs study or internalize MEV. (2) The same workflow can serve as adversarial
testing: protocol developers can search forked states for profitable, harmful
transaction sequences before deployment. (3) By recording validated plans,
code revisions, and execution traces, \sysname provides a reproducible testbed
for assessing AI agents in adversarial DeFi environments.}
\section{\sysname System Design} \label{sec:system}

\begin{figure*}[!t]
    \centering
    \includegraphics[width=1\textwidth, trim=4.0cm 1.8cm 8.5cm 1.5cm, clip=true]{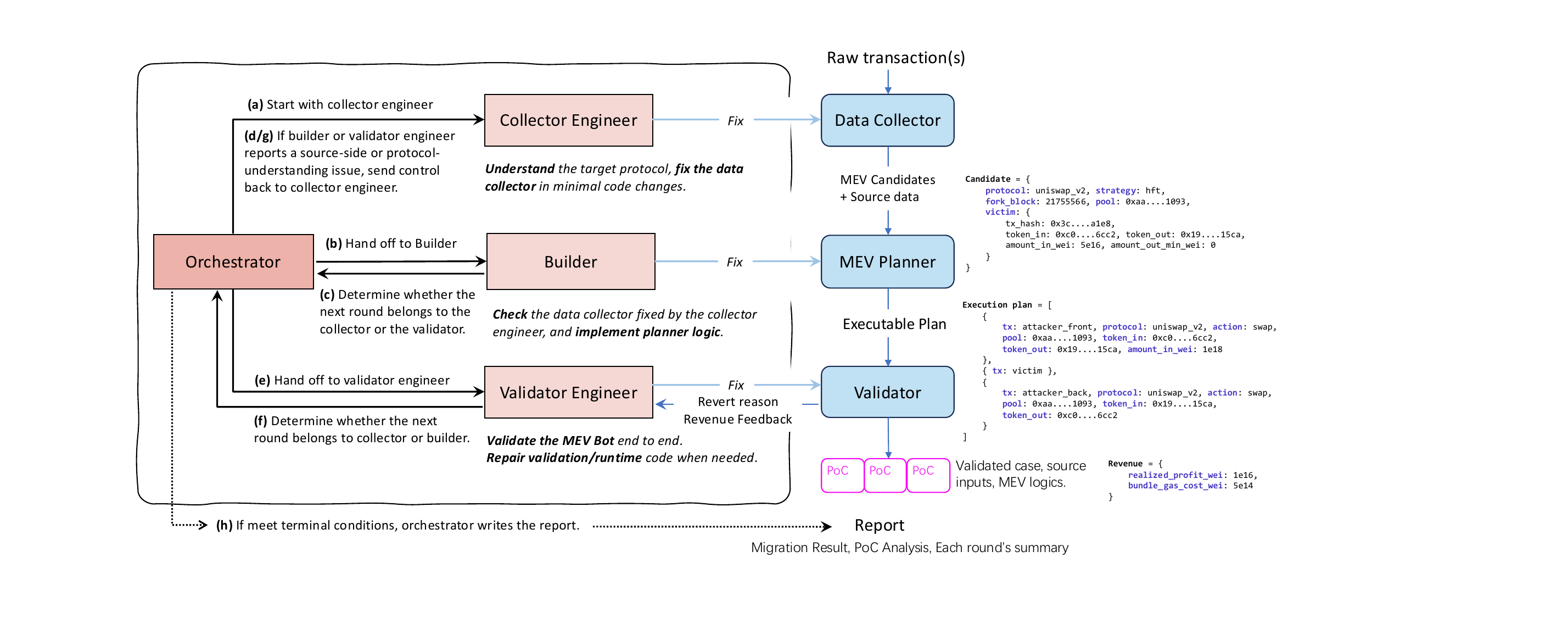}
    \caption{The system architecture of \sysname, detailing the internal workflow of the protocol adaptation mode. The parent orchestrator governs the sequential scheduling of subagents, incorporating diagnostic loopback routing for iterative control. Because \sysname executes only one dedicated mode at a time, this diagram serves as a representative example of \sysname's working modes.}
    \label{fig:agent_loop_full}
\end{figure*}

In this section, we present the system architecture of \sysname. As illustrated in Fig.~\ref{fig:agent_loop_full}, \sysname organizes its system components into two primary categories: (1) the \textbf{Agent Harness}, comprising an orchestrator and three specialized subagents responsible for coordination, reasoning, and code generation; and (2) the \textbf{MEV Bot}, comprising the data collector, MEV planner, and validator that form a three-stage execution workflow. The Agent Harness iteratively repairs the code implementing these stages, while the MEV Bot executes the resulting pipeline and provides concrete feedback. We avoid directly employing agents to control MEV strategies or make decisions in real time due to the high inference latency of LLMs. In contrast, this architectural decoupling removes LLM inference latency from the critical path of MEV execution while ensuring that, despite the nondeterministic nature of agent reasoning, the generated MEV bot executes deterministically.

\subsection{MEV Bot Decomposition}
\label{3-stage-seperation}
Assigning an entire complex MEV workflow to a single LLM can overload its context, leading to implementation errors or missed instructions. Inspired by prior work on multi-agents and task specialization~\cite{liu2024lost,qian2024chatdev}, \sysname strictly decomposes the MEV bot workflow into three distinct stages to limit the task complexity. Specifically, the data collector acquires and processes raw on-chain data (transactions and liquidity states) to aggregate candidate MEV opportunities. The planner performs the core mathematical computations and parameter optimization, applying AMM invariants to estimate profits. It then converts the computed results into a concrete plan that specifies the transactions, parameters, and execution conditions required for on-chain submission. Finally, the validator simulates the plan, normalizes revenue, and diagnoses transaction reverts or other failures. The core of the validator is a deterministic caller of Anvil, a local node with forking capabilities from an EVM-compatible chain developed by Foundry~\cite{foundry_anvil}.

As illustrated in Fig.~\ref{fig:agent_loop_full}, the separation establishes a clear division of stages, each assigned to a corresponding specialized agent. This task separation prevents overlapping edits and reduces the cognitive load on each subagent, mitigating the risk of LLM hallucinations. It also improves maintainability and code reuse by allowing common infrastructure, such as the core mathematical planner, to be shared seamlessly across different MEV strategies. More details about each stage of the MEV bot workflow are in Appendix~\ref{Appendix:3-stage-separation}.

\subsection{Hierarchical Multi-Agent Harness}
At the core of agent harness is an orchestrator agent that coordinates three specialized subagents without human intervention. The orchestrator operates in one mode at a time, with each mode defining a single objective, an initial subagent, and a set of permitted feedback paths. Specifically, cross-protocol and cross-chain migration initiates the process strictly with the collector engineer to prioritize source-side adaptation. In contrast, for MEV variant exploration, the focus shifts to the builder to generate novel MEV planner logic before iteratively refining the collector and validator. This mode-specific coordination gives each subagent a clear objective and confines its modifications to the corresponding MEV bot stage, thereby reducing instruction ambiguity.
Fig. \ref{fig:agent_loop_full} primarily elaborates the protocol adaptation mode, where an existing MEV strategy is migrated to a different DeFi protocol while preserving its core economic intent and control flow. The adaptation focuses on protocol-level differences, such as pool discovery, state representation, fee calculation, swap entry points, callback semantics, and liquidity or pricing mathematics.

\stitle{Workflow.}
Protocol adaptation follows a serial, iterative workflow governed by the orchestrator. As shown in Fig.~\ref{fig:agent_loop_full}, it begins when \textbf{(a)} the orchestrator assigns the initial task to the collector engineer, responsible for understanding the target protocol and generating empirical MEV candidates. The code snippet on the right-hand side of Fig.~\ref{fig:agent_loop_full} illustrates a typical candidate for a sandwich opportunity. Next, \textbf{(b)} the orchestrator transfers control to the builder. The builder then modifies the MEV planner, which uses these candidates to construct an executable attack plan. During this phase, \textbf{(c)} the builder assesses whether the plan can be made and determines which subagent should handle the next round. \textbf{(d \& g)} 
If the builder finds any mistakes in the collector logic or protocol understanding, the orchestrator will send the control back to the collector engineer to fix the root problem.
Once builder decides the plan compatibility is achieved, \textbf{(e)} execution is delegated to the validator engineer, who runs the entire MEV bot to evaluate profitability and capture execution traces. Following this, \textbf{(f)} if the validator determines the current bot cannot achieve positive profitability, control will be returned to the builder for refinement. These iterative loops continue until the bot workflow is validated with at least one confirmed profitable historical case, \textbf{(h)} at which point the orchestrator synthesizes a comprehensive report and submits the final code.

\stitle{System Outputs.}
Each run of \sysname produces three categories of outputs. First,
the agent-generated code. Through iterative closed-loop refinement, the
agents repair and produce executable MEV code that is either a novel MEV workflow or an adaptation to the target protocol or chain. Second, an experiment
report. The orchestrator produces a structured report that summarizes each loop iteration and states whether the variant exploration or adaptation was successful. Third, a Proof of Concept (PoC).
A PoC is a verifiable proof that the agent-generated code, when executed atomically on a forked block state, completes the entire attack flow successfully and yields a net positive revenue.
To prevent agents from fabricating results, 
agents do not produce PoC directly. Instead, the PoC is strictly derived from the execution trace of the validator.
While the agent may refine the validator by adjusting Anvil invocation methods or profit normalization, it cannot directly intervene in the Anvil execution or alter its results. Every generated PoC inherently contains an execution record, which guarantees the objectivity and stability of the feedback.

\stitle{The other two modes.} \label{Sec:other two modes design}
Compared with protocol adaptation, the MEV variant explorer mode initiates the workflow at the builder rather than the collector engineer. The builder first proposes and evaluates distinct strategy variants, implements the corresponding planner, and then incrementally modifies the collector and validator to support the promising variant. If a proposed mechanism is invalidated, control returns to the builder to explore a new branch. The exploration terminates only after \sysname identifies a variant that is distinct from existing strategies with PoCs.
In comparison, cross-chain adaptation mode is similar to cross-protocol, except for maintaining independent RPC connections for each chain.

\stitle{Closed-loop diagnostics.}
The structured handoffs among subagents establish a closed diagnostic loop, where subagents communicate execution states, EVM revert reasons, and profitable outcomes. Rather than relying on blind retries, \sysname routes specific errors back to the responsible agent (e.g., instructing the builder on why a transaction reverted) without cluttering the broader system context. To guarantee economic safety in variant explorer mode, \sysname offers proofs for verified profitable variants. This autonomous loop ultimately compiles all validated MEV strategies into a report and PoCs.

\eat{
\stitle{Natural language handoff.}
Existing agentic blockchain security and smart contract exploitation systems (e.g., A1, TxRay) frame contract interaction as a static generation task, outputting only standardized vulnerability reports or templated exploit scripts. While suitable for offline postmortem analysis, this rigid methodology is fundamentally misaligned with the dynamic demands of live exploitation, which requires flexible multi-step interactions rather than static artifacts.
As demonstrated in recent studies~\cite{lee2026formattax, li2026constraint}, strict structured output schemas (e.g., JSON) impose a "format tax" on LLMs that model attention is diverted to syntax compliance, degrading core reasoning capabilities and suppressing external tool invocation. Conversely, unconstrained linguistic communication~\cite{qian2024chatdev} serves as a unifying bridge for multi-agent collaboration, delivering superior problem-solving and software development performance.
Building on this insight, \sysname replaces predefined interfaces with natural language handoffs. Agents exchange conversational context instead of rigid JSON payloads, including vulnerability hypotheses, logic flows and debugging feedback. The system defines only a clear terminal objective and agent role boundaries, without dictating a step-by-step execution pipeline. By decoupling reasoning from formatting, agents can dynamically negotiate strategies, freely invoke blockchain testing tools such as Anvil and Foundry, and iteratively refine code. This design avoids format penalties and unlocks the reasoning capacity essential for real-world MEV code generation.
}

\subsection{Implementation}

Our system implementation consists of 5,367 lines of Python for six MEV bot workflows (HFT, SBA, JIT, BUR, LR, and MLL detailed in Tab.~\ref{tab:variants}) and 518 lines for the agent harness\footnote{Source code is at \url{https://anonymous.4open.science/r/Blockchain-MEV-CF60}.}. Our implementation enforces a strict architectural boundary between the agentic module and the MEV bot. Agents are only used for code generation, while historical replay and profitability evaluation are performed entirely by the offline MEV bot, which replays historical
blocks against forked Anvil nodes.
All agents run through the Codex CLI (v0.144.6), with an orchestrator agent session coordinating three specialist agents~\cite{openai_subagents}: collector engineer, builder, and validator engineer. 


Following Fig.~\ref{fig:agent_loop_full}, the MEV Bot framework follows a three-stage pipeline: (i)~a source layer (data collector) extracts transactions, receipts, logs, and historical protocol state and constructs MEV candidates; (ii)~strategy specific planners (planner) derive executable attack parameters; (iii)~a runtime engine (validator) validates each plan against an isolated Anvil fork and invokes a unified profitability module that values the attacker's token balance changes using historical CoinGecko prices and deducts transaction fees. Our implementation on Ethereum provides 33 strategy-protocol lanes, covering 11 MEV strategies across CPMM, CLMM, and Balancer V2. Additional chain-specific implementations support Base and BNB Smart Chain.
The Ethereum experiments query the local Reth endpoints, whereas the Base and BNB Smart Chain experiments use chain-specific archive RPC endpoints loaded from the experiment environment.
\sysname produces SQLite databases (recording each specific attack cases), checkpoints, summaries, and replayable source inputs.

\section{Experimental Evaluation}

\begin{table*}[t]
    \centering
    \caption{MEV reference strategies reproduced and variants generated by \sysname.
    The table summarizes their category and execution structures.
    The complexity labels \textit{Basic} and \textit{Complex} distinguish
    canonical single-victim strategies from strategies involving multiple
    victims or composite attacker actions, respectively.}
    \label{tab:variants}
    \scriptsize
    \begin{tabularx}{\textwidth}{@{}c l >{\hsize=0.9\hsize\raggedright\arraybackslash}X l >{\hsize=1.1\hsize\raggedright\arraybackslash}X}
        \toprule
        \textbf{Type} & \textbf{Complexity} & \textbf{Strategy} & \textbf{Category} & \textbf{Description} \\
        \midrule

        \multirow{6}{*}{Reference}
        &\multirow{3}{*}{Basic}
        & {\bf HFT}: High-Frequency Trading~\cite{zhou2021high} & Sandwich & 1$\times$Frontrun (optimal sizing) $\rightarrow$ 1$\times$Victim $\rightarrow$ 1$\times$Backrun\\
        && {\bf SBA}: Swap Backrun Arbitrage & Backrun & 1$\times$Victim $\rightarrow$ 1$\times$Backrun  \\
        && {\bf JIT}: Just-In-Time Liquidity~\cite{zhou2021just} & Liquidity & 1$\times$Add Liquidity $\rightarrow$ 1$\times$Victim $\rightarrow$ 1$\times$Remove Liquidity  \\
        \cmidrule{2-5}
        &\multirow{3}{*}{Complex}
        & {\bf BUR}: Burger Sandwich Attack~\cite{sandwichattack} & Sandwich & 1$\times$Frontrun $\rightarrow$ N$\times$Victims (same-direction swaps) $\rightarrow$ 1$\times$Backrun \\
        && {\bf LR}: Liquidity Refined~\cite{sandwichattack} & Sandwich & 1$\times$Frontrun $\rightarrow$ 1$\times$Victim $\rightarrow$ 1$\times$Backrun \textnormal{(Swap--Add Liquidity--Swap)} \\
        && {\bf MLL}: Multi-Layered Liquidity~\cite{sandwichattack} & Liquidity & 1$\times$Add Liquidity $\rightarrow$ N$\times$Victims $\rightarrow$ 1$\times$Remove Liquidity  \\
        \midrule

        \multirow{5}{*}{Variants}
        &\multirow{2}{*}{Basic}
        & {\bf SBA-HFT}: Swap Backrun Arbitrage with High-Frequency Trading Optimization & Backrun & 1$\times$Victim $\rightarrow$ 1$\times$Backrun (optimal sizing) \\
        && {\bf JIT-HFT}: Just-In-Time with High-Frequency Trading Optimization & Liquidity & 1$\times$Add Liquidity $\rightarrow$ 1$\times$Victim (optimal sizing) $\rightarrow$ 1$\times$Remove Liquidity \\
        \cmidrule{2-5}
        &\multirow{3}{*}{Complex}
        & {\bf BUR-HFT}: Burger Sandwich Attack with High-Frequency Trading Optimization & Sandwich & 1$\times$Frontrun (optimal sizing) $\rightarrow$ N$\times$Victims $\rightarrow$ 1$\times$Backrun\\
        && {\bf LR-HFT}: Liquidity Refined with High-Frequency Trading Optimization & Sandwich & 1$\times$Frontrun (optimal sizing) $\rightarrow$ 1$\times$Victim $\rightarrow$ 1$\times$Backrun \textnormal{(Swap--Add Liquidity--Swap)} \\
        && {\bf MLL-HFT}: Multi-Layered Liquidity with High-Frequency Trading Optimization & Liquidity & 1$\times$Add Liquidity (optimal LP sizing) $\rightarrow$ N$\times$Victims $\rightarrow$ 1$\times$Remove Liquidity \\

        \bottomrule
    \end{tabularx}
\end{table*}

We aim to answer the following research questions:
\begin{itemize}
\item\textbf{RQ1: Strategy reproduction and variant exploration.} 
Can \sysname reproduce known strategies and discover new ones?
(Sec.~\ref{Sec_Experiments: Variants})

\item\textbf{RQ2: Cross-protocol adaptation}. 
Can \sysname adapt strategies across heterogeneous AMM protocols while
preserving reusable strategy logic?
(Sec.~\ref{sec:rq2-protocol})

\item\textbf{RQ3: Cross-chain adaptation}. 
Can \sysname migrate strategies to new chains while maintaining execution
quality, profitability, and reasonable costs?
(Sec.~\ref{sec:rq3-chain})

\eat{\item\textbf{RQ4: Runtime performance}. How efficiently do the generated MEV
bots plan and validate opportunities across protocols and chains?}
\item\textbf{RQ4: Planning latency}. How efficiently do the generated MEV
bots construct execution plans across protocols and chains?
(Sec.~\ref{sec:rq4-runtime})

\item\textbf{RQ5: Ablation study}. How does the harness design of \sysname
contribute to its overall capabilities? (Sec.~\ref{sec:rq5-ablation})
\end{itemize}

\stitle{Runtime environment}.
Our experiments run on an Ubuntu 24.04.4 LTS server with Intel Core i9-14900 (24 physical cores / 32 threads, up to 5.8 GHz), 62 GiB RAM, and a 1 TB NVMe volume. We use Foundry/Anvil v1.5.1 for forked EVM execution, Web3.py v7.14.1 for blockchain interaction, SQLite v3.45.1 for artifact storage, and Reth v1.10.2 as the local Ethereum archive node. Agent model and reasoning effort are fixed as GPT-5.4 with xhigh reasoning effort.

\stitle{Datasets \& chain selection.} We comprehensively evaluate \sysname on Ethereum, Base, and BNB Smart Chain (BSC)~\cite{binance_academy_mev_2026, zhang2025following, qin2023imitation}. For each chain, our dataset contains over 1.5 million blocks spanning from Feb. to Aug. 2025. These chains represent three distinct environments in the modern EVM ecosystem. Ethereum serves as the primary baseline, as it remains the dominant venue for decentralized finance (DeFi) liquidity and represents the most complex MEV landscape. BSC is selected as a prominent alternative Layer-1 characterized by high transaction volumes, retail-heavy DeFi activity, and shorter block times, offering a different network state compared with Ethereum mainnet. Finally, Base is selected to represent emerging Layer-2 optimistic rollups. Layer-2 networks typically rely on a centralized sequencer with a private mempool to order transactions and construct blocks \cite{ferreira2024rolling}. This design hides pending transactions from the public network, removing the visibility required by sandwich attacks.

\stitle{Assumptions}. 
\eat{
Note that our experiment isolates the profitability of each strategy by executing them independently and does not model competition with other MEV bots. 
}
To fairly compare MEV variants and their performance across AMMs and chains, we replay each historical transaction independently, assuming the attacker possesses unbounded capital and not considering submission via private relay, excluding competition.
Thus, the revenues stand for the upper bounds. We value profit directly in WETH/WBNB and deduct gas. We do not simulate asset acquisition, on-chain conversion, flash-loan repayment, or builder bribes.

\subsection{RQ1: Strategy Reproduction and Variant Exploration}
\label{Sec_Experiments: Variants}

We first evaluate whether \sysname can reproduce established MEV strategies and
generate novel variants on Ethereum.
We count a strategy as successfully reproduced or generated only when its
planner and execution logic complete under the validator and produce a
positive-profit case. 

\stitle{Reference strategies.}
As shown in Table~\ref{tab:variants},
\sysname successfully reproduced and validated six reference strategies already reported in prior research, spanning
sandwich, backrun, and liquidity-based MEV.
The three basic reference strategies are the canonical single-victim
strategies HFT, SBA, and JIT. HFT executes an optimally sized front-run before
the victim and a backrun afterward~\cite{zhou2021high}; SBA executes a
same-pool backrun after the victim; and JIT adds liquidity before the victim
swap and removes it afterward~\cite{zhou2021just}.
The three complex reference strategies are BUR, LR, and MLL~\cite{sandwichattack}, which
involve either multiple victims or composite attacker actions.
For example, the BUR description
$1\times\text{Frontrun}\rightarrow N\times\text{Victims}\rightarrow
1\times\text{Backrun}$ denotes one attacker transaction before a group of victims and
one attacker transaction after it. To reproduce a reference strategy, technical papers are given to the orchestrator agent. The reproduced codes are further censored by human experts. 

\stitle{Variants.}
Under the MEV variant explorer mode, \sysname generated and validated five novel
variants from reproduced reference strategies. Two patterns emerge: \emph{(1) Larger parameter spaces}:
rather than introducing a new action sequence, the variants replace a fixed or
restricted parameter choice with a broader search over the corresponding
attacker input or liquidity size; 
\emph{(2) HFT composition}:
each variant combines HFT's search-based sizing with the execution structure of another reference strategy. For instance, SBA-HFT and JIT-HFT combine it with basic SBA and JIT, while BUR-HFT, LR-HFT, and MLL-HFT combine it with their complex
counterparts. Consequently, each variant preserves the action ordering and complexity classification of its parent strategy.
The discovered variants yield a total revenue increase of $1.02\times$ to $15.97\times$ across all protocols compared to the evaluated baseline strategies.
We next illustrate one discovered variant.


\begin{figure}[t]
    \centering
    \includegraphics[width=1\linewidth, trim=8.5cm 5cm 5.8cm 2.2cm, clip=true]{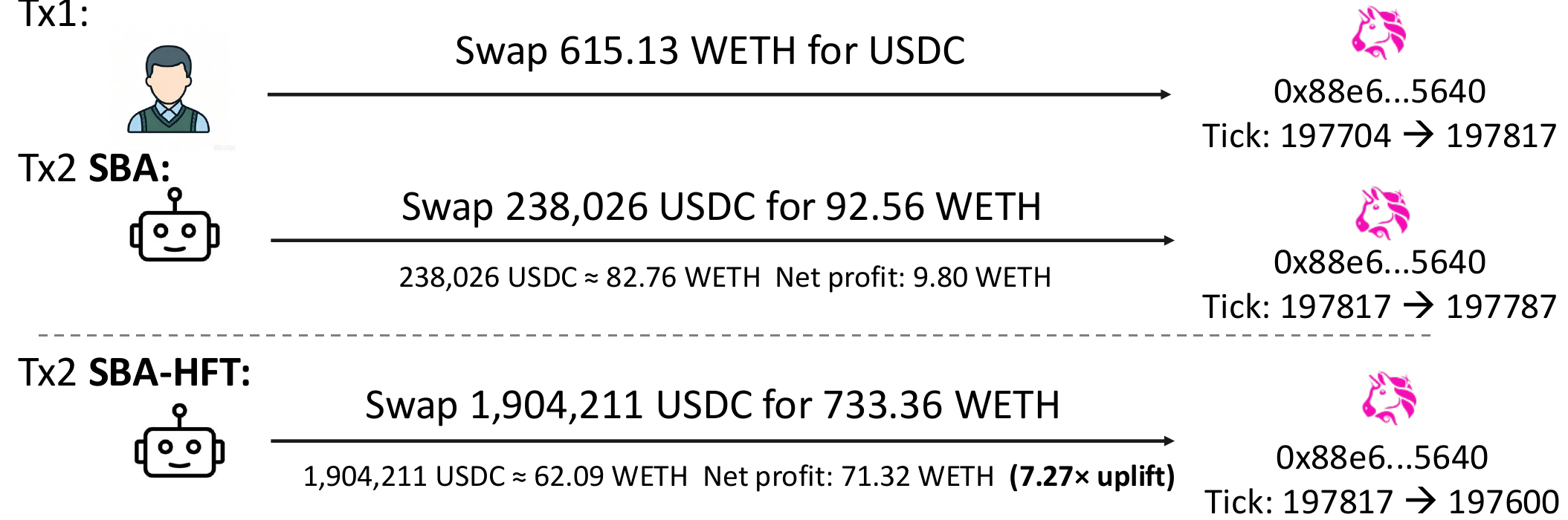}
    \caption{
    SBA vs. SBA-HFT strategy on the same Uniswap V3 opportunity. SBA-HFT yields $7.27\times$ the net profit of SBA.
    }
    \label{fig:SBA-HFT-Case}
\end{figure}

\stitle{Case study: SBA-HFT in a Uniswap V3 pool.}
Fig.~\ref{fig:SBA-HFT-Case} illustrates the execution differences between SBA and SBA-HFT on an Ethereum opportunity (block~\href{https://etherscan.io/block/21764567}{21764567}). In this scenario, a victim transaction~\href{https://etherscan.io/tx/0x249653245220edb78719fdb0924575a78b826bc3be37932705c6a9f7d312c5c5}{\texttt{0x24...c5c5}} swaps $615.13$ WETH for USDC, routing $369.02$ WETH through the focal Uniswap V3 $0.05\%$ USDC/WETH pool~\href{https://etherscan.io/address/0x88e6a0c2ddd26feeb64f039a2c41296fcb3f5640}{\texttt{0x88...5640}}. This massive trade creates a significant price displacement, moving the active tick from 197704 to 197817. Following this victim swap, an attacker can backrun the transaction by executing a reverse trade (swapping USDC back for WETH) to pull the pool price toward its original state. Both strategies operate on the exact same post-victim state but differ fundamentally in how they determine the size of this reverse trade.

\etitle{Result.}
While both strategies successfully extract arbitrage, their differing sizing mechanisms lead to a massive discrepancy in returns. SBA selects a backrun input of 238,026 USDC, yielding a net profit of 9.80 WETH (0.004 WETH in gas). Conversely, SBA-HFT inputs a significantly larger 1,904,211 USDC order, recovering 733.36 WETH. Valuing the USDC inputs at the daily rate ( $3.477\times10^{-4}$ WETH/USDC) and subtracting a 0.010 WETH gas cost, SBA-HFT achieves a net profit of 71.32 WETH. Thus SBA-HFT captures $7.27\times$ the net profit of standard SBA.

\etitle{Analysis \& Insights.}
The profound performance gap stems from how each strategy explores the available liquidity. Standard SBA calculates a base trade size from the victim's volume and current pool liquidity, then tests a fixed list of multipliers (e.g., 0.5x, 1x, 1.5x) to select its final backrun amount. Its conservative sizing stops the backrun prematurely at tick 197787, leaving substantial arbitrage unextracted. In contrast, SBA-HFT employs a multi-stage algorithmic search (\textsc{SizeOptimization} in Alg.~\ref{alg:sba_hft}). It first conducts coarse probing across larger orders of magnitude, applies a ternary search to bound the most promising region, and finally executes a granular local scan to pinpoint the optimum $s^*$. 

While the underlying HFT search algorithm pre-existed in a codebase repository, \sysname successfully performed adaptive migration and applied it to a single-pool swap-backrun scenario. This HFT-style optimization allows the strategy to confidently scale its backrun deeper into the pool down to tick 197600. Ultimately, this variant demonstrates the \sysname's capacity to deeply comprehend parameter search algorithms and correctly fuse them into established MEV architectures, replacing heuristics with precise numerical optimization to calculate the exact trade volume that maximizes net profit against the pool's price impact curve.


\eat{
\stitle{Case study: SBA-HFT in a Uniswap V3 pool.}
SBA and SBA-HFT are single-pool swap-backrun strategies: both execute directly
after the same victim swap and reverse its price impact. They differ only in
how they size the backrun. SBA tests a fixed ladder of input ratios, whereas
SBA-HFT searches a wider input range and refines the most profitable region
(See Alg.~\ref{alg:sba_hft} for algorithmic procedure).

\cy{I think for the case study, the text here does not convey enough useful information, since readers will read your Figure 3 carefully. We should have (1) one concise paragraph describing the attck; (2) one concise paragraph describing the results (skip the unnecessary details); (3) give some insights/intuition/analysis on how exactly HFT helps here, including how does it find the best parameter? why is the new parameter better than SBA?, etc. This is the most important thing for the case study. While you can say the comprehensive details could be found in Appendix, readers should have some clues here. Saying ``SBA-HFT searches a wider input range and refines the most profitable region '' is not enough.}
Figure~\ref{fig:SBA-HFT-Case} illustrates this sizing difference on the same
Ethereum opportunity. We fork immediately after
block~\href{https://etherscan.io/block/21764567}{21764567} and replay
transaction~\href{https://etherscan.io/tx/0x249653245220edb78719fdb0924575a78b826bc3be37932705c6a9f7d312c5c5}{\texttt{0x2496...c5c5}}
from block~\href{https://etherscan.io/block/21764568}{21764568}. The victim swaps
$615.13$ WETH for USDC, of which $369.02$ WETH passes through the focal
Uniswap V3 $0.05\%$ USDC/WETH pool~\href{https://etherscan.io/address/0x88e6a0c2ddd26feeb64f039a2c41296fcb3f5640}{\texttt{0x88e6...5640}}.
We execute the victim and each backrun from scratch on separate isolated Anvil
forks, excluding earlier transactions in the target block. Thus, both
strategies start from the same reconstructed state; the comparison does not
reproduce the block's actual execution path.

The victim moves the pool from tick 197704 to 197817. From this same
post-victim state, both strategies backrun by swapping USDC for WETH. SBA
selects 238,026 USDC from its fixed ladder and stops at tick 197787. SBA-HFT
selects 1,904,211 USDC and continues to tick 197600; the additional swap
remains profitable after input and gas costs. 

See Appendix~\ref{app:SBA-HFT_theoretical_proofs} for detailed analysis.

\etitle{Result.}
SBA recovers 92.56 WETH, whereas SBA-HFT recovers 733.36 WETH. We value both
USDC inputs using the CoinGecko daily price for 2025-02-03, approximately
0.0003477 WETH per USDC. Their input costs are therefore 82.75 and 662.03 WETH,
respectively.

After subtracting 0.004 WETH and 0.010 WETH in gas, SBA and SBA-HFT yield net
profits of 9.80 and 71.32 WETH. On this paired opportunity, SBA-HFT gains 61.52
WETH more and yields $7.27\times$ the profit of SBA. The gain comes from
applying HFT-style search-based sizing to SBA, which permits a profitable
backrun across additional Uniswap V3 liquidity intervals.
}

\eat{
\paragraph{Baselines}
\label{Baselines} We conduct broad MEV experiments, including sandwich, arbitrage, and liquidity manipulation, covering the most common MEV behaviors and accounting for a vast majority of on-chain MEV activity \cite{zhou2021high, mclaughlin2023large, qin2022quantifying, zhou2021just}. To ground the system's exploratory capabilities and to activate agents, we pre-deploy four basic MEV strategies for the agents to adapt and explore: HFT \cite{zhou2021high}, SBA, and JIT \cite{zhou2021just}. These baselines provide a robust starting point for variant exploration. To aid understanding, the concrete execution patterns are listed in Table \ref{tab:variants}. HFT (High-Frequency Trading) refers to the canonical single-victim sandwich baseline. Following the method described in \cite{zhou2021just}, the attacker optimizes the front-running amount before the sandwich execution. SBA (Swap Backrun Arbitrage) does not front-run the victim, but swaps in the reverse direction after a single transaction to capture the resulting price displacement. For SBA, we assume the attacker owns arbitrary tokens. JIT (Just-in-Time liquidity), in which the attacker temporarily adds liquidity immediately before a victim swap and removes it afterward to capture swap fees and transient price effects. More engineering details can be found in the Appendix \ref{Appendix:Baselines}.
}







\eat{
The generated strategies do not represent entirely unprecedented strategy paradigms, but controlled variations of proven strategy families. 
Finally, agents shows reliable instinct on parameter optimization. Recognizing the effectiveness of search-based optimization from HFT, the agents autonomously generalize and adapt these mathematical search algorithms across strategy families, consistently maximizing yield (SBA-HFT and JIT-HFT). 
}



\subsection{RQ2: Cross-Protocol Adaptation}
\label{sec:rq2-protocol}

\eat{To evaluate \sysname's ability to generalize across heterogeneous AMM designs, we run its protocol adaptation mode to adapt MEV strategies to two target protocols on Ethereum: Uniswap V3, a concentrated-liquidity AMM (CLMM), and Balancer V2, a weighted-pool AMM.}
We next evaluate whether \sysname can adapt MEV strategies across distinct AMMs. Starting from CPMM implementations, \sysname adapts $11$ strategies to two target protocols on Ethereum: Uniswap V3, a concentrated-liquidity AMM (CLMM), and Balancer V2, a weighted-pool AMM. We measure profitability, opportunity conversion, and code reuse.


\begin{table*}[!htb]
    \centering
    \caption{Performance of MEV strategies on AMMs and chains. \textbf{Plan./Cand.}, \textbf{Sim./Plan.}, \textbf{Pro./Sim.}, and \textbf{Pro./Cand.} denote the ratios of planned cases to candidates, successfully simulated cases to planned cases, profitable value cases to simulated cases, and profitable value cases to candidates, respectively (Definitions of four case categories in App.~\ref{four_case_categories}). Absolute counts are retained for candidates, profitable cases and total revenue.}
    
    \label{tab:opportunity_conversion}
    
   \scriptsize
    \begin{tabular}{c l l l l c c c c l l}
        \toprule
        \textbf{Chain} & \textbf{Strategy} & \textbf{Protocol} & \textbf{Candidates} & \textbf{Profited}
        & \textbf{Plan./Cand.} & \textbf{Sim./Plan.} & \textbf{Pro./Sim.} & \textbf{Pro./Cand.}
        & \textbf{Revenue (ETH or BNB)} \\
        \midrule
        \multirow{33}{*}{Ethereum} 
        &\multirow{3}{*}{HFT} 
        & CPMM & 1,116,645  & 26,394 & 43.31\% &18.48\% &  29.53\% & 2.41\% & 1,072.06 \\
        
        && CLMM &18,656,365 &57,659  &14.00\%  & 7.22\%   & 30.56\% & 0.31\%&3,920.81 \\
        
        & & Balancer &321,207  &598  &7.33\%     & 7.08\%     & 35.83\% & 0.19\%& 27.36 \\

        \cmidrule{2-10}
        &\multirow{3}{*}{SBA} 
        &CPMM & 4,470,901   &191,453 &66.00\%  &25.41\%   &25.53\% &4.28\%  & 754.08 \\
        && CLMM &18,645,815 &186,171 &27.47\%& 33.15\%   &10.96\% & 1.00\% & 3,837.03 \\
        &&Balancer &245,724 &12,666  &26.59\%    &75.98\%   &25.52\% &5.15\%  & 223.12 \\

        \cmidrule{2-10}
        &\multirow{3}{*}{SBA-HFT} 
        &CPMM &4,535,461 &26,579 & 72.62\% & 25.16\%  & 3.21\% & 0.59\% & 21,475.46 \\

        && CLMM &18,395,094 &184,866 & 35.31\%  & 35.71\%  & 7.70\% &  1.00\% &17,972.26  \\
        
        && Balancer & 245,724 & 6,980  & 7.29\% & 75.09\%  & 51.90\% & 2.84\% &201.85 \\
        
        \cmidrule{2-10}
        &\multirow{3}{*}{JIT} 
        
        & CPMM &4,669,731  & 26,948   &22.71\%    &7.71\%    &32.94\% & 0.58\% & 317.10\\
        && CLMM &11,508,279 & 1,719,907 &69.14\% & 68.41\% & 31.60\% &14.94\%& 49,665.13\\
        &&Balancer & 321,207  &882  &93.11\%  &73.41\%     & 0.40\% &0.27\%  & 4.23 \\

        \cmidrule{2-10}
        &\multirow{3}{*}{JIT-HFT} 
        &CPMM &26,972,006 &169,599 & 21.92\% & 8.08\% & 35.49\% & 0.63\% &1,715.75  \\
        && CLMM &  15,753,115 &1,874,402 & 57.45\% & 61.84\% & 33.49\% & 11.90\% & 49,297.76 \\
        && Balancer & 320,905 & 3994  & 97.17\% & 73.58\% &   1.74\% & 1.24\% & 54.74 \\

        \cmidrule{2-10}
        &\multirow{3}{*}{BUR} 

        &CPMM & 395,874    &40,159 &84.33\%   &55.14\% & 21.82\% & 10.14\% &663.63 \\
        &&CLMM &1,498,086  &30,718  & 40.00\%  &21.13\% & 24.26\% & 2.05\%  & 244.71 \\
        &&Balancer &1,427 &100      &40.78\%   &43.13\%   & 39.84\% & 7.01\%  & 3.56    \\

        \cmidrule{2-10}
        &\multirow{3}{*}{BUR-HFT} 
        & CPMM &398,658 & 41,989 & 84.43\% & 34.99\%  & 35.65\% & 10.53\% & 14,309.44\\
        &&CLMM &1,492,079 &30,456 & 40.32\% & 21.18\% & 23.89\% & 2.04\% & 245.31 \\
        &&Balancer &1,427 &6 & 38.61\% & 1.45\%  & 75.00\% & 0.42\% &5.44  \\

        \cmidrule{2-10}
        &\multirow{3}{*}{LR} 
        &CPMM & 4,954,331 & 15,681   &16.58\%  &15.10\% & 12.64\% & 0.32\%& 53.21\\
        && CLMM &16,951,928 & 8,738 & 0.34\% & 21.39\%  & 70.19\% & 0.05\% & 825.15 \\

        &&Balancer & 321,207 &4,841 & 99.61\% & 61.78\%  & 2.45\% & 1.51\% & 28.11\\ 
        
        \cmidrule{2-10}
        &\multirow{3}{*}{LR-HFT} 
        &CPMM & 1,977,364 &15,579 & 15.89\%  &  15.25\% & 32.87\% & 0.79\% & 1,459.36  \\
         && CLMM & 18,852,384 &  5,647  & 0.07\%  & 57.78\% & 72.14\% &  0.02\% &   677.63 \\
        
        && Balancer & 321,207 & 3764  & 99.61\% & 59.85\%& 1.96\% & 1.17\% & 46.02  \\

        \cmidrule{2-10}
        &\multirow{3}{*}{MLL} 
        & CPMM &658,553 & 11,299 & 9.06\%  & 30.26\%  & 62.58\% & 1.72\%& 761.48\\
        
        &&CLMM &1,308,375 &107,775& 91.38\% & 57.41\% & 15.70\% & 8.24\% &2451.15 \\
        
        && Balancer &10,895 &1,085 & 100\%  & 29.45\%  & 33.81\% & 9.96\% & 34.77 \\
        
        \cmidrule{2-10}
        &\multirow{3}{*}{MLL-HFT} 
        &CPMM &658,553  &11,736  & 9.10\%  & 31.01\%& 63.15\%  &  1.78\% & 496.42 \\
        
        &&CLMM &936,470  &122,210 & 87.98\% & 66.54\% & 22.29\% & 13.05\%  & 24,210.40 \\

        && Balancer &41,657 & 4,885  & 97.19\%  & 31.21\% & 38.67\% & 11.73\% & 90.16\\

        \midrule
        \multirow{6}{*}{BSC} 
        
        & HFT &CPMM&5,099,610 & 30,127 & 29.97\% & 29.55\%  & 6.67\% & 0.59\% &5,307.04\\

        &SBA &CPMM&3,380,574 & 308,939 & 21.87\% & 63.99\% &  65.28\% & 9.14\% & 2,283.78 \\

        &JIT &CPMM& 246,306& 31,548  &92.00\%  &67.90\%&  20.50\% & 12.81\%& 420.83 \\
        
        & BUR &CPMM&322,471 & 58,833 & 46.27\%  & 41.25\%  & 95.60\% & 18.25\% & 1,940.28 \\
        
        & LR &CPMM & 537,843 & 25,981 & 45.66\% & 25.22\%  & 41.94\% & 4.83\% & 375.72 \\
        
        & MLL &CPMM&226,299 & 77,030 & 50.53\%  & 68.82\% &  97.88\% & 34.04\% & 2327.51 \\
        \midrule

        \multirow{2}{*}{Base} 
        & SBA &CPMM& 195,040 & 2,260& 27.43\%  & 89.31\%  & 4.73\% & 1.16\% & 16.13 \\

        & SBA-HFT &CPMM& 823,099 &3944   &36.57\%  & 90.19\% & 1.45\% & 0.48\% &1682.92 \\

        \bottomrule
    \end{tabular}
\end{table*}

\stitle{Profitability.} 
Table~\ref{tab:opportunity_conversion} reports revenue and opportunity
conversion. \sysname produces executable and profitable adaptations across both target AMMs. Balancer consistently yields the fewest candidates and lowest revenue, due to the substantially lower trading activity on it. We next highlight several findings.


\etitle{JIT-HFT.} 
\emph{(1) High CLMM revenue.} JIT-HFT and JIT generate the highest total revenue on CLMMs compared to all other strategies, driven by three factors. First, the high volume of DeFi activity on CLMMs provides large candidate sets, while JIT mechanics make a broad range of swaps susceptible to fee-capture opportunities. Second, our evaluation assumes unbounded attacker capital. Because executing JIT on CLMMs may require liquidity hundreds of times larger than the victim’s swap volume~\cite{xiong2023demystifying}, this assumption removes a major practical bottleneck. Third, adding JIT liquidity generally does not violate the victim’s slippage constraint, allowing the victim transaction to execute and the attacker to capture its trading fees.
\emph{(2) Profit distribution.} JIT-HFT profits are concentrated in rare, capital-intensive cases (Tab.~\ref{tab:jit-hft-v3-profit-dist} in Appendix). Although 84.4\% of profitable cases earn less than 0.01\,WETH each, they contribute only 5.35\% of total profit. In contrast, the 15.6\% of cases earning 0.01--10\,WETH contribute 88.5\% of total profit. 
Larger profits also require substantially more capital: average capital rises from 1.25\,WETH for cases earning below 0.0001\,WETH to 4{,}259\,WETH for those earning at least 10\,WETH.
\emph{(3) Limited benefit over JIT.}
While HFT-based sizing increases JIT revenue by 5.41$\times$ on CPMM, it provides no revenue gain on CLMM, since the heuristic sizing used by JIT already captures nearly all available CLMM revenue and leaves little room for further optimization.

\etitle{BUR-HFT.} BUR-HFT increases BUR’s revenue by 21.56$\times$ on CPMM but yields negligible gains on CLMM. This discrepancy stems from the structural divergence between the two liquidity models. BUR-HFT replaces fixed front-run sizing with a continuous-style ternary search algorithm, which assumes a profit curve with a single peak. This approach works well with CPMM's continuous pricing curve. However, CLMM distributes liquidity discretely across specific price ticks, resulting in a partitioned execution surface. On CLMM, even if profit changes predictably within adjacent ticks, the overall profit function does not necessarily follow a single-peak pattern. Consequently, the ternary search fails to find the global optimum, revealing that the agents cannot yet tailor their optimization methods to more complex liquidity models.

\eat{
\etitle{BUR-HFT.}
BUR-HFT increases BUR’s revenue by 21.58$\times$ on CPMM but yields negligible gains on CLMM. The discrepancy stems from the structural divergence between the two liquidity models. BUR-HFT replaces fixed front-run sizing with search algorithms, which work well with CPMM's continuous pricing curve. However, CLMM distributes liquidity discretely across ticks, resulting in a discontinuous profit function that limits the benefit of precise sizing optimization. This comparison reveals that the agents cannot yet tailor their optimization methods to more complex liquidity models.
The lack of a large BUR-HFT gain on CLMM is that the ternary search wrongly assumes the profit curve rises to one single peak, treating a continuously rising trend as a simple peak problem. The planner performs a continuous-style ternary search over a CLMM execution surface that is partitioned by tick crossings and liquidity changes. However, on CLMM, even if the profit changes within adjacent initialized ticks, when considering all intervals together, the overall profit does not necessarily follow a pattern of one single peak.
}


\etitle{MLL-HFT.} MLL-HFT earns less total revenue than MLL on CPMM. Across their 11,244 shared profitable cases, 15.50\% produce equal revenue. Among the remaining cases, MLL outperforms in 67.37\%, whereas MLL-HFT performs better in 32.63\%.
This underperformance occurs because MLL-HFT inappropriately applies the same ternary search algorithm to a non-peaking curve. In CPMM, the profit for liquidity provision generally increases with the amount provided and then flattens. Consequently, the baseline MLL produces the best outcome simply by selecting the maximum allowed limit ($20\times$ the victim trading volume). When MLL-HFT applies ternary search to this flattening curve, the profit differences between the sampled points at larger liquidity sizes become indistinguishable, causing the search to converge prematurely before reaching the optimal upper limit. 
\eat{
This difference arises because the ternary search used by MLL-HFT does not match the underlying CPMM model.
In CPMM, profit generally increases with the amount of liquidity provided and then flattens. The baseline MLL produces the best outcome by simply choosing the maximum allowed liquidity limit, which equals $20\times$ the victim trading volume. In contrast, MLL-HFT applies ternary search that wrongly assumes the profit curve rises to one single peak, treating a continuously rising trend as a simple peak problem. When the objective becomes nearly flat at larger liquidity sizes, its probes become indistinguishable and the search may converge before reaching the upper limit.
}

\stitle{Opportunity conversion.}
The conversion rates identify where the protocol adaptation fails. Plan./Cand. measures whether \sysname can instantiate the strategy for a candidate on the target AMM. Sim./Plan. measures whether the adapted plan executes under the AMM's protocol semantics. Pro./Sim. measures whether an executable adaptation remains profitable after deducting fees and gas, and whether its profit can be valued. Appendix~\ref{four_case_categories} gives detailed definitions.

For BUR-HFT on Balancer V2, Sim./Plan. is only 1.45\%, compared with 43.13\% for BUR. Aggressive front-run sizing reverts the victim transaction in 92.4\% of failed cases, leaving six profitable ones.
Balancer's asymmetric weighted pricing makes approximate sizing overestimate the safe front-run amount and violate victim slippage. The low conversion rate thus indicates a protocol-specific sizing mismatch, rather than absent candidates.
For SBA on CLMMs, the planning and validator stages yield 1.69 million simulations, of which only 10.9\% were profitable due to transaction costs: 75.3\% of the simulated cases fail to be profitable due to pool fees, 4.1\% are unprofitable by gas costs, and another 9.7\% are excluded due to the lack of daily normalized prices for certain tokens.




\begin{table}[t]
\centering
\small
\caption{Code Similarities Across Protocol Adaptations (C=CPMM, CL=CLMM, B=Balancer V2).}
\label{tab:similarity_metrics}
\begin{tabularx}{\columnwidth}{l X X X X}
\toprule
Strategy & JSCPD & JSCPD & JPlag  & JPlag  \\
        &(C--CL) &(C--B) &(C--CL) &(C--B) \\
\midrule
HFT     & 0.53\% & 1.25\% & 9.49\%  & 15.20\% \\
SBA     & 0.00\% & 0.84\% & 12.53\% & 17.04\% \\
JIT     & 0.86\% & 0.78\% & 23.51\% & 17.33\% \\
SBA-HFT & 0.00\% & 0.00\% & 20.08\% & 23.26\% \\
JIT-HFT & 1.90\% & 1.43\% & 22.01\% & 20.87\% \\

\midrule
Average & 0.66\% &0.86\%  & 17.52\% &  18.74\% \\
\bottomrule
\end{tabularx}
\end{table}

\stitle{Code reuse.}
\eat{
We measure whether \sysname discovers reusable strategy logic across AMMs by
comparing five CPMM implementations with their CLMM and Balancer V2
adaptations. JSCPD~\cite{jscpd} measures exact text reuse, while
JPlag~\cite{prechelt2002finding} measures token-level structural similarity.
}
We measure code reused by \sysname during protocol adaptation by
comparing five CPMM implementations with their corresponding CLMM and Balancer V2 adaptations. JSCPD~\cite{jscpd} measures exact text reuse, while JPlag~\cite{prechelt2002finding} is capable of detecting pairwise similarities even in the presence of code obfuscation.
As shwon in Tab.~\ref{tab:similarity_metrics},
exact reuse averages only 0.66\% for CLMM and 0.86\% for Balancer V2, because pool state, quoting, calldata, and liquidity operations differ significantly across AMMs. Structural similarity reaches 17.52\% and 18.74\%, respectively. The matching regions, detailed in Tab.~\ref{tab:matching_regions}, occur primarily in the planner. Overall, this code reuse analysis demonstrates that the planner is consistently reused during cross-protocol adaptation, intelligently isolating necessary adaptation efforts to the collector and validator.

\eat{
\etitle{Findings.}
Exact text reuse is low, averaging 0.66\% for CPMM to CLMM and 0.86\% for
CPMM to Balancer V2. This reflects the need to rewrite protocol-facing logic,
including pool state, quoting, calldata, and liquidity operations. Structural
reuse is higher, averaging 17.52\% and 18.74\%, respectively. The largest
matching regions are in the strategy-control layers rather than in protocol
decoding or pool mathematics, as shown in Table~\ref{tab:matching_regions}.

Overall, protocol adaptation is not simple code copying. The analysis identifies
reusable strategy structure in the planner, while the adapted code still
requires substantial rewriting of protocol-specific execution logic. This
separation shows how \sysname can discover reusable economic mechanisms without
assuming that heterogeneous AMMs share the same implementation.
}

\eat{Table~\ref{tab:matching_regions} locates most matches in strategy control logic.}
\eat{Appendix Table~\ref{tab:matching_regions} reports the matching regions, which
occur mainly in strategy control logic.}

\subsection{RQ3: Cross-Chain Adaptation}
\label{sec:rq3-chain}
Using CPMM as the target protocol, we next run \sysname to migrate MEV strategies from Ethereum to Base and BSC. Because Base’s private mempool prevents observation of pending transactions, we evaluate only the two backrunning strategies on Base. On BSC, we evaluate all six strategies: HFT, SBA, JIT, BUR, LR, and MLL.

\stitle{Cross-chain performance.}
Table~\ref{tab:opportunity_conversion} summarizes the resulting profitable
cases and revenue. On Base, SBA-HFT earns 1{,}682.92 ETH compared with 16.13 ETH for SBA. On BSC, the migrated strategies also produce profitable cases, including 5{,}307.04 BNB for HFT, 2{,}283.78 BNB for SBA, and 420.83 BNB for JIT. These results show that the adapted strategies remain executable and profitable under different chain environments, although the absolute revenue is not directly comparable across chains or strategy opportunity sets.


\begin{figure*}[t]
    \centering
    \includegraphics[width=1\textwidth]{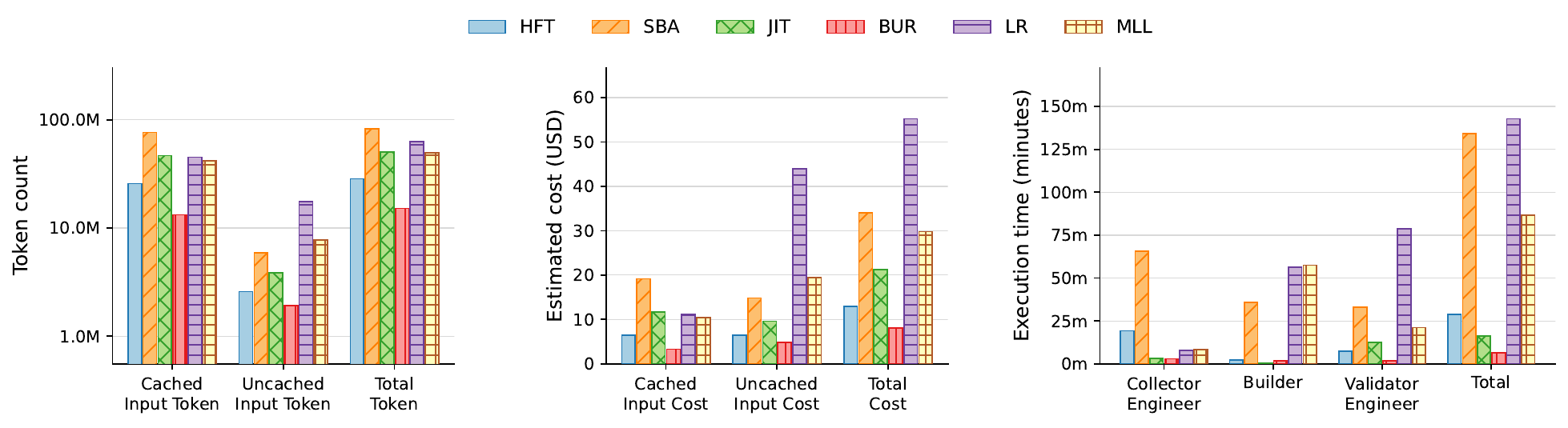}
    \vspace{-2em}
    \caption{Tokens and time consumed during cross-chain migration on BSC. Left: tokens consumption (log scale), grouped by cached input tokens, uncached input tokens, and total tokens. Middle: estimated API cost (USD) under gpt-5.4 pricing~\cite{openai2026gpt54}. Right: per-role agent execution time (collector engineer, builder, validator engineer) and the total agent execution time.}
    \vspace{-1em}
\label{fig:token-usage}
\end{figure*}

\stitle{Economic feasibility}.
To characterize the resource cost of cross-chain adaptation, we evaluate the tokens and time consumed when adapting six reference strategies to CPMM on BSC. The adaptation is executed in two sequential batches, starting with (HFT, SBA, JIT) and followed by (BUR, LR, MLL), with those in each batch processed simultaneously. Figure~\ref{fig:token-usage} reports the cached, uncached, and total input tokens, together with the per-agent execution time for the migration. Cached input tokens are prompt tokens that the model retains in memory and reuses across consecutive requests, which reduces redundant computation.  Total input tokens aggregate the usage by the orchestrator agent and all unique subagent sessions, while uncached input is the difference between total and cached input. SBA consumes the most total tokens (83M), LR has the longest runtime (2.4h) and the highest uncached-input cost, while BUR consumes the lowest. 

BUR migration completes substantially faster than the earlier ones because the agent reuses the BSC infrastructure already built and validated during the HFT adaptation, instead of migrating from scratch. The data collection pipeline, fork based runtime environment, and pricing library are all adopted without modification. \sysname only needs to implement the strategy specific components (i.e., the collector and the planner) and a thin adaptation layer that connects them to the existing pipeline, allowing validation to pass on the first attempt without rework. This efficiency is enabled by \sysname's stage decomposition and explicit interfaces, which let the agent inspect prior artifacts and modify only the components that differ. Consequently, BUR migration requires fewer tokens and less execution time, suggesting that the marginal cost of adapting additional strategies decreases as shared chain infrastructure accumulates.

\eat{
\begin{figure*}[t]
    \centering
    \includegraphics[width=0.8\linewidth, trim=0cm 0.6cm 0cm 0cm, clip=true]{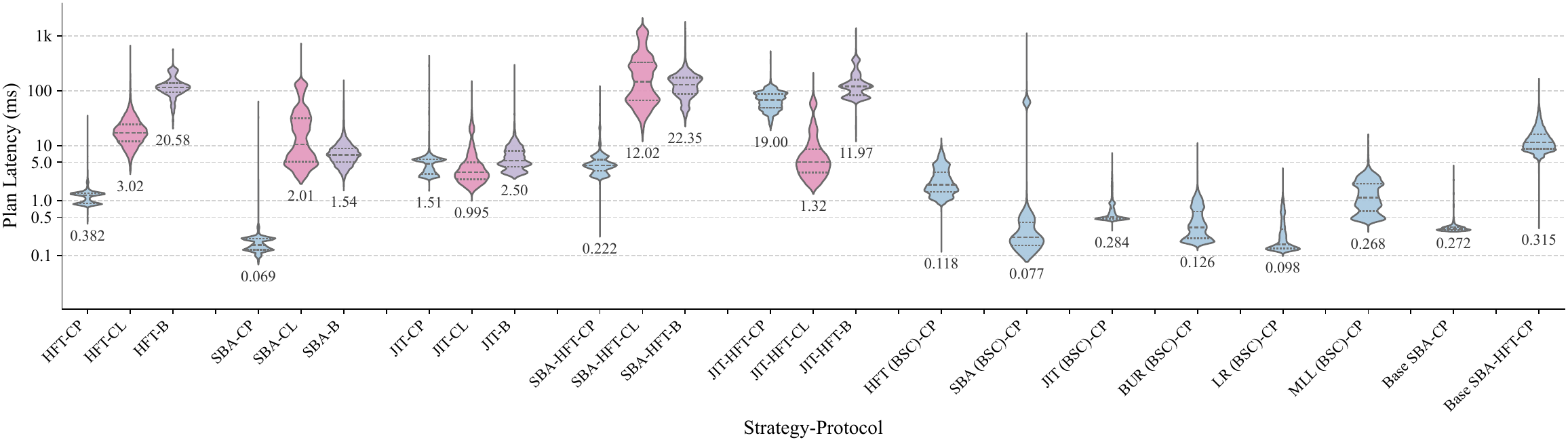}
    \vspace{-1em}
    \caption{Planning-latency distributions for five Ethereum strategies across CPMM, CLMM, and Balancer V2, and six BSC strategies on CPMM.}
    \vspace{-1em}
    \label{fig:violin_latency}
\end{figure*}
}

\eat{
\begin{table}[t]
\centering
\small
\caption{P50 and P90 planning latency across protocols and chains
(CP=CPMM, CL=CLMM, B=Balancer V2).}
\label{tab:planning-latency-old}
\begin{tabularx}{\columnwidth}{l l X X}
\toprule
Chain & Strategy--Protocol & P50 (s) & P90 (s) \\
\midrule
\multirow{15}{*}{Ethereum}
 & HFT--CP & $2.0\times10^{-4}$ & $2.4\times10^{-4}$ \\
 & HFT--CL & $0.017$ & $0.035$ \\
 & HFT--B & $0.115$ & $0.212$ \\
 & SBA--CP & $1.6\times10^{-4}$ & $2.1\times10^{-4}$ \\
 & SBA--CL & $0.010$ & $0.075$ \\
 & SBA--B & $6.8\times10^{-3}$ & $0.012$ \\
 & JIT--CP & $4.7\times10^{-3}$ & $6.3\times10^{-3}$ \\
 & JIT--CL & $3.3\times10^{-3}$ & $9.6\times10^{-3}$ \\
 & JIT--B & $5.4\times10^{-3}$ & $0.010$ \\
 & SBA-HFT--CP & $4.4\times10^{-3}$ & $6.9\times10^{-3}$ \\
 & SBA-HFT--CL & $0.146$ & $0.658$ \\
 & SBA-HFT--B & $0.129$ & $0.216$ \\
 & JIT-HFT--CP & $0.068$ & $0.103$ \\
 & JIT-HFT--CL & $5.1\times10^{-3}$ & $0.021$ \\
 & JIT-HFT--B & $0.119$ & $0.285$ \\
\midrule
\multirow{2}{*}{Base}
 & SBA--CP & \tbf & \tbf \\
 & SBA-HFT--CP & \tbf & \tbf \\
\midrule
\multirow{6}{*}{BSC}
 & BUR--CP & $3.3\times10^{-4}$ & $9.8\times10^{-4}$ \\
 & HFT--CP & $2.0\times10^{-3}$ & $4.9\times10^{-3}$ \\
 & JIT--CP & $5.1\times10^{-4}$ & $1.0\times10^{-3}$ \\
 & LR--CP & $1.6\times10^{-4}$ & $6.3\times10^{-4}$ \\
 & MLL--CP & $1.1\times10^{-3}$ & $2.8\times10^{-3}$ \\
 & SBA--CP & $2.2\times10^{-4}$ & $6.5\times10^{-4}$ \\
\bottomrule
\end{tabularx}
\end{table}
}

\begin{table}[t]
\centering
\small
\setlength{\tabcolsep}{4pt}
\caption{P50 and P90 planning latency across protocols and chains
(CP=CPMM, CL=CLMM, B=Balancer V2).}
\label{tab:planning-latency}
\begin{tabularx}{\columnwidth}{@{}ll*{2}{>{\centering\arraybackslash}X}@{}}
\toprule
Chain & Strategy & P50 (ms) & P90 (ms) \\
\midrule
\multirow{15}{*}{Ethereum}
 & HFT--CP & 0.20 & 0.24 \\
 & HFT--CL & 17 & 35 \\
 & HFT--B & 115 & 212 \\
 & SBA--CP & 0.16 & 0.21 \\
 & SBA--CL & 10 & 75 \\
 & SBA--B & 6.8 & 12 \\
 & JIT--CP & 4.7 & 6.3 \\
 & JIT--CL & 3.3 & 9.6 \\
 & JIT--B & 5.4 & 10 \\
 & SBA-HFT--CP & 4.4 & 6.9 \\
 & SBA-HFT--CL & 146 & 658 \\
 & SBA-HFT--B & 129 & 216 \\
 & JIT-HFT--CP & 68 & 103 \\
 & JIT-HFT--CL & 5.1 & 21 \\
 & JIT-HFT--B & 119 & 285 \\
\midrule
\multirow{2}{*}{Base}
 & SBA--CP & 0.31 & 0.36 \\
 & SBA-HFT--CP & 11& 23 \\
\midrule
\multirow{6}{*}{BSC}
 & BUR--CP & 0.33 & 0.98 \\
 & HFT--CP & 2.0 & 4.9 \\
 & JIT--CP & 0.51 & 1.0 \\
 & LR--CP & 0.16 & 0.63 \\
 & MLL--CP & 1.1 & 2.8 \\
 & SBA--CP & 0.22 & 0.65 \\
\bottomrule
\end{tabularx}
\end{table}

\subsection{RQ4: Planning Latency}
\label{sec:rq4-runtime}

An MEV opportunity can disappear once another transaction changes the relevant state. We therefore measure planning latency, defined as the interval between the planner receiving candidates and generating a complete execution plan.
As shown in Table~\ref{tab:planning-latency}, while most Ethereum CPMM and all BSC strategies complete planning within 10\,ms at P90, SBA-HFT on CLMM requires 658\,ms.
The latency of the SBA-HFT planner originates from a two-layer nested computation to determine the optimal backrun amount. The outer layer performs a ternary search before sequentially scanning candidates, thereby generating numerous amounts to evaluate. The inner layer then replays each candidate by simulating the victim swap to establish the updated price, followed by the reverse backrun to calculate profit. Crucially, CLMM makes this replay an intrinsically discrete process because liquidity exists exclusively between initialized ticks. Advancing a swap requires first computing the amount consumed within the current active liquidity. Upon crossing a tick, the planner must update the active liquidity before continuing the computation for the remaining amount. Since large victim and backrun amounts inevitably traverse multiple ticks, these traversals proportionally increase the computational load.

\eat{
The latency of the SBA-HFT planner originates from a two layer nested discrete computation. The outer layer executes a sizing search by first narrowing down the backrun monetary range by applying a ternary search before sequentially scanning through several candidates. The outer layer generates numerous backrun amounts to evaluate for any single candidate opportunity. 
The inner layer individually performs replay for each candidate amount. The planner must simulate the victim swap from the initial pool state to establish the updated price. Then the planner proceeds to simulate the reverse backrun from this specific state to calculate the profit by pricing the input against the output. The CLMM makes this replay an intrinsically discrete process since the pool liquidity exists exclusively between initialized ticks. Advancing a swap to the next tick requires computing the exchangeable quantity based on the active liquidity. Crossing a tick forces the system to update the active liquidity using the net liquidity before continuing the computation. A large victim size inevitably crosses multiple ticks. A large backrun amount similarly traverses multiple ticks. These traversals proportionally increase the computational load for every single quote.
}

\stitle{Feasibility on real-time blockchains}.
On Ethereum, the average block interval is 12\,s, and a victim transaction spends 23--35\,s in the mempool~\cite{zhou2021high}. In contrast, the maximum P90 planning latency for HFT is only 212\,ms, easily allowing for real-time execution. Base introduces ``Flashblocks'' to provide pre-confirmations within 200\,ms~\cite{wang_flashblocks_2025}, while its maximum P90 planning latency is 23\,ms. Similarly, while the fast finality on BSC reduces confirmation times to under 2\,s, the maximum P90 planning latency is merely 4.9\,ms, significantly shorter than the opportunity window. Note that end-to-end latency also depends on infrastructure factors beyond our scope, such as network propagation and transaction submission, which can be reduced through deployment optimizations.


\begin{table}[h]
\centering
\caption{Ablation results on BSC blocks 46262540--46272539 for \sysname and pipeline-structured agent (PSA).}
\label{tab:ablation}
\vspace{0.5em}
\setlength{\tabcolsep}{4pt} 
\small 
\resizebox{\columnwidth}{!}{%
\begin{tabular}{llrrrrr}
\toprule
System & Strategy & Candidates & Planned & Runtime & Profit & Total Profit \\
       &          &            &         & Success & Found  & (WBNB) \\
\midrule
(a) \sysname
& HFT   & 20,890 & 5,222  & 3,307  & 2,002 & 23.80 \\
& SBA   & 19,777 & 6,949  & 4,342  & 2,480 & 7.17  \\
& JIT   & 17,690 & 17,683 & 12,282 & 64    & 0.19  \\
\cmidrule{2-7}
& Total & 58,357 & 29,854 & 19,931 & 4,546 & 31.17 \\
\midrule
(b) PSA
& HFT   & 9,953  & 708    & 435    & 28  & 0.09 \\
& SBA   & 12,444 & 2,329  & 2,108  & 53  & 3.73 \\
& JIT   & 12,442 & 12,442 & 9,158  & 28  & $1.09 \times 10^{-7}$ \\
\cmidrule{2-7}
& Total & 34,839 & 15,479 & 11,701 & 109 & 3.81 \\
\bottomrule
\end{tabular}%
}
\end{table}



\subsection{RQ5: Ablation Study}
\label{sec:rq5-ablation}

We compare \sysname with a single-agent baseline, denoted \emph{Pipeline-Structured Agent} (PSA) on three BSC CPMM strategies: HFT, SBA, and JIT.
Both use the same task objectives, reference strategies, execution modules, and profitability evaluator. PSA is given the three-stage decomposition (i.e., data collector, planner, and validator) and the corresponding code regions, and may iterate between these stages. Unlike \sysname, PSA does not employ specialized subagents, explicit responsibility boundaries, or diagnostic loops. This ablation therefore isolates the benefits of role-specialized multi-agent coordination. Tab.~\ref{tab:ablation} summarizes the results on 10,000 BSC blocks.


\stitle{Findings.}
The performance gap between \sysname and PSA arises primarily due to the smaller set of candidate and planned cases produced at the collector and planner stages, rather than from simulation failures. At the collector stage, PSA's HFT accepts only three CPMM router selectors, whereas \sysname's collector supports six, including fee-on-transfer paths. Consequently, PSA silently drops a subset of legitimate transactions. Moreover, PSA fetches transactions and receipts individually, whereas \sysname retrieves each entire block together with all of its transaction receipts.
The block-level fetching reduces RPC failures and preserves complete intra-block execution context, allowing \sysname to filter more accurately and avoid missing valid opportunities.
As a result, \sysname identifies more valid opportunities. Across HFT, SBA, and JIT, \sysname consistently finds roughly 17,690 to 20,890 candidates, compared to PSA’s 9,953 to 12,444.

At the planner stage, the gap widens further. 
PSA's BSC HFT planner is essentially a thin wrapper around the reference Ethereum HFT planner. It adjusts only basic chain parameters and addresses, while relying on standard off-the-shelf math to calculate trades, showing a lack of customized logic. Consequently, the key difference is how the planners bound the front-run search range. The PSA planner starts from a small probe value and expands exponentially without a tight upper bound, allowing the search to reach unrealistically large values and produce front-run amounts that are often infeasible on chain. In contrast, \sysname's BSC planner explicitly caps front-run sizes at 1\% of the pool reserve. This conservative bound ensures better on-chain executability and yields more execution plans. As a result, the planned rate for PSA's HFT is only 7.11\%, compared with 25.00\% for \sysname. 

Similarly, the planners of SBA and JIT are also lightweight wrappers around the Ethereum reference planners. The reused planner oversimplifies the input data structures, pool snapshots, and victim constraints, causing many BSC-specific fields to be lost or improperly generated when assembling the execution plan. Candidates are discarded during simulation because the required data are incomplete. 
This pattern reveals the limitation of PSA. A single agent controlling the entire MEV bot struggles to maintain fine-grained control over every stage's fields. By contrast, a multi-agent architecture can assign different specialist agents to optimize different stages respectively, and repeated revision loops can continuously strengthen this during fine-grained code repair.


\eat{
\stitle{Findings.}
The Monolithic agent produced candidates and plans for all three strategies, but it did not produce a positive-profit case. In contrast, \sysname validated positive-profit cases for HFT and SBA. The difference is not explained by candidate volume alone. For SBA, the Monolithic agent collected 109 candidates and planned 17, compared with 45 candidates and 10 plans for \sysname, yet it found no profitable case. For HFT, all three Monolithic plans executed on Anvil but realized negative profit. Neither configuration found a profitable JIT case during the task period; subsequent batch experiments confirmed that the JIT pipeline generated by \sysname is functional.

The above gap can be attributed to the following.
First, the Monolithic agent treats initial negative results as terminal, whereas \sysname routes diagnostic feedback back to the responsible agent and retries. Second, one agent handling all stages repeatedly revisits shared code and mixes collection, planning, and validation concerns. Third, the Monolithic agent explores only a small block window and does not systematically examine alternative ranges. These results support the value of the complete role-specialized and closed-loop harness, but they do not isolate the contribution of each individual component.
}

\eat{
\stitle{Findings.}
The Pipeline-Structured agent (PSA) produced candidates and plans for all three strategies, and it validated positive-profit cases for SBA and JIT. However, it did not find a profitable HFT case.
For SBA, the PSA collected 26 candidates and planned 10 across four block windows, finding three positive-profit APE$\to$WBNB cases for a total of +0.569\,WBNB. For JIT, it collected 7 candidates and planned 7 across two block windows, finding three positive-profit cases on the same APE/WBNB pool for a total of +0.00045\,WBNB. In contrast, neither the Monolithic agent nor \sysname found a profitable JIT case, and the Monolithic agent found no profitable case at all.

The gap between Pipeline-Structured and \sysname can be attributed to the following. First, the PSA enables iterations between collector, planner, and validator stages, contributing to the positive trial cases.
Second, the lack of role responsibility boundary specification led to significant time spent on dependency restoration and code reconstruction rather than strategy exploration. One single agent handled all stages, repeatedly revisiting shared code and mixing collection, planning, and validation concerns.
Third, the PSA's HFT failure (0 positive profit across 4 runtime attempts) suggests that the paper-based sandwich planning strategy requires more sophisticated sizing and parameter exploration than the pipeline structure alone provides. \sysname's specialized planner agent and diagnostic feedback loops enable it to systematically explore alternative sizing ranges, whereas the PSA accepted negative runtime results without re-planning. Fourth, the Pipeline-Structured agent's success on SBA and JIT was partly attributable to its extensive block-window exploration. 

Overall, the Pipeline-Structured configuration outperforms Monolithic, demonstrating that the three-stage pipeline decomposition itself provides value. However, the absence of specialized subagent role definitions, responsibility boundaries, and autonomous diagnostic loops limits its effectiveness, particularly on strategies requiring precise parameter exploration such as HFT. These results support the value of the complete role-specialized and closed-loop harness, but they do not isolate the contribution of each individual component.

\stitle{Summary.}
Across the evaluation, \sysname reproduces established MEV strategies, generates
validated variants, and adapts them across heterogeneous AMM protocols and
blockchains. The experiments cover more than \tbf historical blocks and produce
positive-profit cases across \tbf strategy--protocol--chain lanes. The results
show that adaptation preserves reusable strategy structure while requiring
protocol- and chain-specific reconstruction. The ablation further shows that
the complete role-specialized, closed-loop harness validates positive-profit
cases where the monolithic configuration does not. Together, these findings
establish \sysname as an offline framework for systematic MEV strategy
generation and adaptation.
}

\section{Applications of \sysname}
\label{sec:applications}


We present three distinct applications to demonstrate the utility and broader
implications of our approach. 

\etitle{Democratizing MEV}.
Rather than eliminating MEV extraction, the Ethereum community has widely adopted auction platforms to facilitate efficient, decentralized and transparent MEV extractions~\cite{yang2023sokmevcountermeasurestheory}. 
Unfortunately, this mechanism simultaneously caused technical monopolies among a few institutions. A small number of entities capture the majority market share relying on complex implementation capabilities. By autonomously generating MEV Bots, \sysname significantly lowers the technical barrier to entry, empowering smaller searchers to compete, allowing Decentralized Autonomous Organizations (DAOs) to capture MEV profits directly for their own treasuries rather than losing them to third parties, and ultimately enhancing the decentralization of blockchains.
\etitle{Accelerating cross-protocol and cross-chain adaptation}.
As the blockchain ecosystem expands into a multi-protocol and multi-chain paradigm, relying on manual scripting introduces severe engineering burden. The initial deployment of new Layer-2 rollups or novel DeFi protocols typically creates a highly profitable, low-competition ``vacuum period.'' However, because traditional extraction requires engineers to manually engineer on novel smart contracts and rebuild logic from scratch, they often fail to deploy quickly enough to capitalize on these fleeting first-mover advantages. 

\sysname fundamentally resolves this engineering bottleneck by decoupling strategy formulation from hardcoded protocol specifics. When migrating to a new blockchain or integrating an unfamiliar AMM, \sysname autonomously parses the target protocol's ABIs, documentation, and state variables to dynamically reconstruct the necessary execution payloads. This allows the system to seamlessly port existing MEV strategies to emerging blockchains or adapt to novel AMM mechanisms without human intervention. Consequently, \sysname enables extractors to rapidly scale their operations across disparate ecosystems and capture value during critical vacuum periods that are otherwise constrained by human capability.


\etitle{Promoting the intersection of AI and DeFi}.
The integration of LLMs and autonomous agents into DeFi remains in its nascent stages. While AI has demonstrated proficiency in code generation and static text analysis \cite{gervais2026aiagentsmartcontract}, its capability to navigate real-time, adversarial financial environments, such as the MEV dark forest, lacks systematic evaluation. 

By deploying \sysname as a pioneering framework for autonomous MEV strategy generation, we systematically evaluate the efficacy and limitations of AI-driven Web3 agents. Our application serves as a benchmark for measuring how well AI models can comprehend complex on-chain states, interpret custom protocol ABIs, and execute rigorous MEV bot workflow without human intervention. Researchers can utilize the logs and operational metrics generated by \sysname, such as its planning latency in strategy formulation, as empirical data to refine future Web3-specific AI models. Furthermore, this application provides actionable insights into how AI will inevitably influence MEV behavior and network dynamics, inspiring future research directions for securely and seamlessly integrating AI into decentralized economies.

%



\section{Related Work}

MEV research has progressed from early studies characterizing MEV exposure~\cite{torres2021frontrunner,qin2022quantifying,zhou2021just} to large-scale empirical analyses~\cite{mclaughlin2023large} and the development of countermeasures~\cite{yang2023sokmevcountermeasurestheory, kavousi_et_al:LIPIcs.OPODIS.2025.36, alpos2026censorship, cryptoeprint:2026/1195}. As Layer-2 networks and rollups have proliferated, recent work has further examined how their distinct sequencing and execution models reshape MEV~\cite{ferreira2024rolling, gogol2025priority}. Alongside these efforts, agentic AI systems have begun automating DeFi security analysis and smart contract vulnerability exploitation. We discuss the two systems most closely related to \sysname.


A1~\cite{gervais2026aiagentsmartcontract} proposes an end-to-end agentic framework that autonomously generates smart contract exploits. It features a closed-loop feedback mechanism, where the AI iteratively refines its attack strategy, written as Solidity code chunks, based on revenue indicators and execution results, demonstrating the ability of AI agents to search complex exploit spaces. However, while this framework excels at finding vulnerabilities within a certain static environment, A1 heavily relies on a fixed set of human-engineered tools for data retrieval, execution, and revenue calculation. Moreover, A1 is explicitly designed for smart contract exploitation rather than for the MEV ecosystem. In contrast, \sysname provides a longer end-to-end agentic chain and stronger versatility. It decomposes the MEV bot workflow into three modular stages and employs specialized agents to autonomously refine their implementations, addressing the limitations of static toolkits. \sysname effectively replaces manual maintenance with agent-driven self-adaptation, enabling the system to autonomously calibrate MEV strategies to non-static environments, including previously unseen protocols and blockchains.

TxRay~\cite{wang2026txray} focuses on postmortem analysis and reconstruction of blockchain attacks. Starting from seed transactions, it autonomously performs analysis, evidence verification, and forked Foundry execution to generate reproducible PoCs. In terms of functionality, however, TxRay is designed to reconstruct observed attacks rather than generate MEV bots or adapt strategies to new environments. It also does not explore new exploitation strategies beyond the observed attack set. In terms of agent harness design, \sysname and TxRay differ significantly in role assignment, task division, and revision loops. Moreover, TxRay uses fixed handoff templates, whereas \sysname supports flexible natural-language handoffs. Overall, prior systems primarily automate the reproduction or exploitation of known attacks, while \sysname moves toward proactive MEV strategy discovery and adaptation.


\section{Conclusion}
To our knowledge, \sysname is the first fully autonomous multi-agent system for MEV bot generation. Empirical experiments on over 1.5 million historical blocks and three different blockchains validate its effectiveness in MEV variant discovery, cross-protocol adaptation, and cross-chain adaptation.



\balance

\bibliographystyle{plain}
\bibliography{main}

\cleardoublepage
\appendix

\section*{Ethical Considerations}
To ensure our study causes zero negative impact on live DeFi ecosystems while upholding rigorous scientific conduct, all empirical evaluations of \sysname are strictly confined to isolated historical replay environments. We structure our ethical assessment across five core pillars as below:

\stitle{Experimental Safeguards \& Zero Harm:}
\sysname involves no real trading or live on-chain execution. All experiments run within isolated local forks spanning 1.5M historical blocks across Ethereum, BSC, and Base. Relying exclusively on public ledger data without touching credentials or broadcasting transactions, our offline analysis causes zero financial loss or disruption to live DeFi ecosystems.

\stitle{Stakeholder Impact:} 
We identify five key affected groups: 
(1)~\textit{Protocol developers and security teams}, primary beneficiaries who gain preemptive insights to defend against newly uncovered MEV variants; 
(2)~\textit{On-chain users and liquidity providers}, long-term beneficiaries of hardened protocols and reduced centralized value extraction; 
(3)~\textit{Academic and industrial security researchers}, who obtain transparent benchmarks for automated MEV analysis; 
(4)~\textit{Legitimate searchers and block builders}, who benefit from lowered barriers to decentralized, open MEV strategy development; and 
(5)~\textit{Malicious adversaries}, a high-risk secondary stakeholder whose exploitation incentives are curtailed through our defensive framing and tiered release scheme.

\stitle{Ethical Principles \& Compliance:} 
We ground our ethical assessment on four principles: 
(1)~\textit{Respect for Privacy}: we analyze purely public records without linking addresses to real identities, preventing user privacy violations; 
(2)~\textit{Risk-Benefit Balance}: we expose unstudied MEV dynamics to enable protocol hardening against centralized value extraction; 
(3)~\textit{Fairness and Equity}: we sample unbiased cross-chain blocks without targeting specific users to democratize MEV transparency; and 
(4)~\textit{Compliance}: all offline experiments follow security research standards to facilitate open community oversight.

\stitle{Harm Mitigations:} 
We enforce three concrete safeguards against potential misuse: 
(1)~\textit{Mitigating Automated Exploitation}: we publish only high-level architectural designs and accompany every uncovered MEV variant with protocol-level defensive remediation; 
(2)~\textit{Constraining Cross-Chain Abuse}: we present cross-chain strategy portability strictly as a risk-forecasting methodology while withholding turnkey execution pipelines; and 
(3)~\textit{Preventing De-anonymization}: we report all empirical findings at aggregate mechanism granularity without highlighting or tracking individual wallet addresses.

\stitle{Publication Rationale \& Tiered Release:} 
Our decision to publish rests on a rigorous risk-benefit evaluation. Because MEV extraction is currently dominated by closed, proprietary syndicates, open academic analysis provides essential public benefits through transparent defensive benchmarks and community oversight. To balance scientific reproducibility with safety, we implement a tiered artifact release scheme: the core agent reasoning framework and defensive analysis modules are fully open-sourced, whereas execution components directly repurposable for live mempool exploitation remain strictly restricted.

\section*{Open Science}
\stitle{Codebase:} To foster open science and reproducibility, we make our codebase and experimental scripts available at an anonymized repository: \url{https://anonymous.4open.science/r/Blockchain-MEV-CF60}.

\eat{
\stitle{Data Availability \& Reproduction:} 
Due to storage constraints, full on-chain states are not bundled directly. Instead, our repository includes end-to-end extraction and replay scripts compatible with standard JSON-RPC endpoints. Full trace reproduction requires access to Ethereum, BSC, and Base archive nodes. All artifacts will be permanently archived and released upon paper acceptance.
}

\section{Detailed MEV Bot Workflow}
\label{Appendix:3-stage-separation}

\subsection{Stage I: On-Chain Data Collector}
The collector ingests and translates heterogeneous blockchain states into candidate observations through three sequential phases. First, it performs streaming block scans via \texttt{eth\_getLogs} indexed by Swap event topics to prune non-candidate transactions prior to invoking intensive state queries. Second, unknown event-emitting addresses are verified dynamically via batched \texttt{eth\_call} probes (confirming valid factory pointers, distinct token pairs, and positive fee tiers) without static registries, while caching unsupported contracts to eliminate redundant probing. Third, for authenticated candidates, it decodes receipts and calldata to extract user-enforced execution bounds (e.g., slippage tolerances, input caps, deadlines, and recipients), including unwrapping nested swap paths within multi-hop aggregators (e.g., KyberSwap).

\subsection{Stage II: MEV Planner}
The planner compiles observation candidates into a deterministic intermediate representation (IR) that encodes execution paths, parameters, and ordering constraints without asserting runtime profits. It executes specialized convex optimization algorithms (detailed in Appx.~\ref{Appendix:CPMM_Strategy_Engineering_Details}) to compute optimal trade volumes and multi-pool routes. For ecosystems with private order flows (e.g., Ethereum), transactions are organized into a \texttt{bundle\_atomic} structure, relying on block builders to guarantee atomic ordering between searcher and victim transactions. Furthermore, the plan sequences participant roles (front-run, victim, and back-run), decomposing attacker transactions into modular action primitives (e.g., Swap, Add/Remove Liquidity across CPMM, CLMM, and Balancer~V2) with runtime-resolved parameters, while preserving raw payloads for victim replay.

\subsection{Stage III: Validator Runtime Simulation and Profit Calculation}
The validator serves as the authoritative engine for economic outcome determination, evaluating synthesized plans within high-fidelity sandboxes. Each candidate executes on an isolated node forked at the pre-target block under on-demand mining to eliminate state contamination. Runtime adapters compile semantic primitives into ABI-encoded calldata to expose parameter flaws, executing victim transactions via \texttt{eth\_sendRawTransaction} and falling back to sender impersonation upon nonce conflicts. Post-execution, the engine measures net balance deltas, crediting native assets directly, converting non-native tokens via CoinGecko pricing at the fork timestamp, and performing mark-to-pool offline liquidation for unclosed LP positions. To eliminate false positives, any pricing, decimal, or liquidation failure marks profit as null ($\text{profit}=\mathtt{null}$), with all validated cases archived into immutable \texttt{all-candidates}, \texttt{runtime-success}, and \texttt{profit-found} artifacts.
\section{Evaluated MEV Strategies Workflow}
\label{Appendix:CPMM_Strategy_Engineering_Details}
\label{Appendix:Baselines}
To perform a consistent framework benchmarking, we adopt the following idealization principles: The attacker may acquire any ERC-20 or native asset required by a strategy and may establish the necessary token approvals before execution. The gas costs of any approvals are also considered in the revenue calculation. Consequently, the attack is not constrained by the attacker's historical inventory, financing channels, or capital capacity. The attacker is also assumed to observe the target transactions before inclusion and to control the relative placement of its own transactions within the same block. We exclude searcher competition, builder payments, and network delays. These assumptions do not make every candidate profitable; rather, they allow us to evaluate all valid historical opportunities under a consistent setting. The strategies we adopt for evaluation are HFT, SBA, JIT, BUR, LR, MLL and their combined versions with HFT, which are as follows:
\subsection{HFT: High-Frequency Trading}

HFT reproduces a classic single-victim sandwich attack from~\cite{zhou2021high}. Leveraging the victim's swap direction and slippage limit, the attacker front-runs the victim swap to inflate the pool price, lets the victim execute at degraded slippage, and immediately back-runs the victim with a reverse swap to close the arbitrage.

The primary engineering challenge in HFT is optimizing the front-run input size. We resolve this via a three-step procedure: First, we employ a binary search to find the maximum feasible input volume bounded by the victim's slippage tolerance. Second, we verify net profitability at this maximum volume and terminate early if unprofitable, as no smaller input can yield positive returns under monotonic pool dynamics. Third, we perform a ternary search over the feasible input range to pinpoint the exact profit-maximizing trade size while accounting for EVM integer truncation artifacts.



\subsection{SBA: Swap Backrun Arbitrage}
SBA captures MEV via a single-pool backrun strategy that profits by trading in reverse against the price displacement induced by a victim swap. Leveraging the temporary reserve shift caused by the victim's exact-input trade, the searcher computes the post-execution pool state and immediately executes an opposite-direction swap to revert the marginal price toward its pre-transaction level, capturing the resulting execution spread. To optimize planning latency, the module adopts a strictly localized same-pool markout model, foregoing cross-pool path search and multi-hop routing during the planning phase.

To compute arbitrage profit, asset values must be normalized into the network's native currency (e.g., ETH on Ethereum). The SBA planner handles four asset categories accordingly: (1)~WETH, mapped directly at 1:1; (2)~WETH-paired assets, priced via on-chain reserves after the swap; (3)~Non-WETH-paired assets, converted using block-synchronized CoinGecko prices; and (4)~Unpriced assets, classified as \textit{pricing uncomputable} and dropped to filter out non-settleable opportunities.

\subsection{JIT: Just-in-Time Liquidity}
JIT captures MEV by sandwiching a large victim swap with transient liquidity to extract high trading fees without long-term holding risk. Detecting an impending large trade, the searcher front-runs it by injecting concentrated liquidity into the target pool, collects the dominant share of swap fees generated during execution, and immediately back-runs it to withdraw the position, profiting whenever the captured fee exceeds the inventory loss.

The planning module optimizes the liquidity size by balancing fee returns against inventory risk: under-sizing yields negligible fees, while over-sizing amplifies inventory displacement beyond fee gains. For each candidate size, the planner prices deposited and withdrawn assets in WETH using synchronized external feeds, computing net profit as the exit value minus the entry value to select the size with the highest positive increment.

To isolate the extraction potential of JIT and maintain a tractable simulation baseline, the evaluation operates under four simplifying assumptions: (1)~atomic transaction adjacency for liquidity mint and burn around the victim swap; (2)~unconstrained capital access with zero opportunity cost; (3)~no concurrent liquidity events or intervening swaps within the target block; and (4)~isolated execution without searcher competition.

\subsection{BUR: Burger Sandwich Attack}
BUR extends the single-victim sandwich attack in HFT to a sequence of same-direction swaps within the same pool and block. Leveraging the clustered victims' cumulative price impact, the attacker front-runs the batch with a single swap, allows the victims to execute sequentially at degraded slippage, and back-runs immediately after the final victim to close the arbitrage.

The primary engineering challenge is preserving the joint executability of the full victim sequence. An excessive front-run induces severe slippage displacement that risks triggering the victims' slippage protection and reverting the transactions. To address this, the planner searches over candidate attack sizes defined by a basis-point (bps) ladder of total victim volume, replaying the sequence against simulated post-front-run reserves to select the first size that keeps all trades executable while yielding positive net profit.

\subsection{LR: Liquidity Refined}
LR augments the sandwich attack by embedding a liquidity-provision step within a two-stage back-run sequence. Leveraging the victim's price displacement and transient fee yield, the searcher front-runs the swap, executes a partial back-run, injects liquidity to mint LP tokens, and submits a final back-run to capture the remaining spread while retaining the LP position.

The core challenge in LR is balancing capital allocation between directional arbitrage and liquidity minting, since misallocating funds to either leg reduces overall capital efficiency and terminal profit. The planner resolves this trade-off by testing discrete liquidity-addition ratios across the five-step pipeline, marking remaining LP tokens to WETH via external price feeds to select the configuration that maximizes aggregate terminal gains.

\subsection{MLL: Multi-Victim Liquidity Layering}
MLL extends JIT liquidity provision to batch-capture trading fees across a consecutive sequence of same-pool, same-direction swaps. Leveraging the aggregate volume across multiple trades, the searcher front-runs the batch with a temporary liquidity injection, collects proportional fee shares across all successive swaps, and back-runs the final trade to withdraw the position.

The fundamental challenge in MLL is sizing the liquidity injection against compounding inventory risk, as larger positions capture greater fee shares but amplify capital lockup and cumulative inventory loss across the batch. To optimize this trade-off, the planner simulates candidate liquidity ratios across the add-swap-remove lifecycle, using synchronized external prices to select the scale that maximizes net returns over inventory displacement.

\subsection{SBA-HFT: Swap Backrun Arbitrage with HFT Optimization}
SBA-HFT improves standard SBA by running a fine-grained numeric search for the optimal back-run trade size. When a victim swap shifts the target pool's price, the searcher back-runs the trade with an opposing swap to capture the price gap, without front-running or cross-pool routing. By expanding the search beyond plain heuristics, SBA-HFT captures extra profit while avoiding oversized trades whose self-induced price impact would erase the arbitrage. Like standard SBA, it uses external price benchmarks and assumes idealized capital access.

We outline the execution workflow in Algorithm~\ref{alg:sba_hft} using four building blocks. A victim transaction $T_v$ is a pending mempool trade that displaces pool prices, and the focal swap $S_{focal}$ is the exact internal hop that touches target pool $P$. An atomic bundle packages the attacker's back-run directly behind $T_v$ via block builders to prevent third-party interference. Finally, \textsc{ValidateOnFork} simulates the bundle on a local state fork to verify net profit after gas and prevent on-chain reverts.

The algorithm encapsulates trade-size optimization inside \textsc{SizeOptimization}. To navigate the non-linear profit curve efficiently, it applies a three-step search: coarse probing across available capital bounds, ternary search to isolate the unimodal peak, and granular local scanning to pinpoint the optimal input size $s^*$.

\subsection{JIT-HFT: HFT-Optimized Just-in-Time Liquidity}
JIT-HFT inherits standard JIT's profit model by front-running an impending victim swap with transient liquidity and immediately back-running it to harvest fees and inventory revaluations. Instead of relying on a single heuristic ratio, JIT-HFT searches across orders of magnitude to pinpoint the optimal liquidity scale.

The central challenge lies in navigating the non-linear trade-off between fee capture and inventory displacement: undersized positions capture negligible fees, while oversized positions amplify inventory loss beyond fee gains. To resolve this, JIT-HFT evaluates a ratio ladder spanning three orders of magnitude (2.5\%--200\% of victim volume). For each candidate, the planner simulates the add-swap-remove lifecycle, prices net asset differentials in ETH via external feeds, and selects the scale maximizing the net increment.

JIT-HFT retains the idealized assumption of unconstrained capital access at zero opportunity cost. This fine-grained search sharpens sizing resolution within single-victim, same-pool semantics, isolating mark-to-market returns without modeling capital costs, searcher competition, or multi-block retention.

\subsection{BUR-HFT: HFT-Optimized Burger Sandwich}

BUR-HFT preserves BUR's profit mechanism: several aligned victim swaps cumulatively move the pool price, and the attacker monetizes that movement with one position surrounding the group. Its only substantive extension is the method used to select the attack size.

The main engineering concern in BUR-HFT is the joint feasibility of the search. Because each victim changes the reserves observed by subsequent victims, every candidate input must keep the entire ordered victim group within its slippage constraint. We first perform range expansion to locate an upper bound at which the group breaks, then binary search that bound to find the largest feasible frontrun. We check profitability at this maximum input; if it is not profitable, no smaller input can be profitable and we stop. The protocol's integer division may let smaller inputs generate equal or higher profit, so we run ternary search over the feasible range and scan nearby integer points to locate the discrete optimum. BUR-HFT inherits BUR's assumptions of complete advance knowledge and unrestricted attack capital; HFT-style optimization improves sizing resolution without changing the victim-group definition.

\subsection{LR-HFT: HFT-Optimized Liquidity Refined}

LR-HFT preserves LR's combination of partial backrun, liquidity provision, and final backrun. Its profit comes from the same two sources as LR but it replaces fixed attack sizing with a joint search over the directional position and the liquidity configuration.

The main engineering concern in LR-HFT is that the two choices interact. The initial position determines both the price movement available for extraction and the attacker's post-victim inventory, while the liquidity configuration determines the LP position, the increase in pool depth, and the slippage improvement on the remaining backrun. We therefore evaluate them jointly rather than optimizing either in isolation. For each candidate frontrun amount we scan a coarse ladder of liquidity-add ratios, locate the strongest broad region, and then refine around it on a finer grid. We compare each complete LR portfolio with a sandwich baseline alternative that adds no liquidity and only count in the cases whose refinement creates incremental market value.

\subsection{MLL-HFT: HFT-Optimized Multi-Victim Liquidity Layering}

MLL-HFT preserves MLL's use of one temporary LP position across a same-block group of aligned victim swaps. Its profit comes from the same fee capture and LP revaluation as MLL, while it uses HFT-style search to optimize the liquidity scale.

The main engineering concern in MLL-HFT is the scale trade-off. A larger position receives a greater share of the victims' fees and reduces their price impact, but requires more two-sided capital and creates greater inventory-rebalancing exposure. We explore a broad range of liquidity scales, refine the most profitable region, and select the position whose withdrawn assets show the largest positive value increment over the initial deposit, valued under historical external prices.

\begin{algorithm}[htbp]
\caption{SBA-HFT: Single-Pool Swap-Backrun Strategy}
\label{alg:sba_hft}
\begin{algorithmic}[1]
\Require Victim transaction $T_v$, Target pool $P$, Max capital $C_{max}$
\Ensure Executed atomic bundle or $\emptyset$

\vspace{1.5mm}
\Statex \textbf{\textit{Phase 1: Simulation \& Setup}}
\State $S_{focal} \gets \text{ExtractTargetSwap}(T_v, P)$ 
\State $State_{post} \gets \text{SimulateImpact}(S_{focal}, P)$ 
\State $S_{rev} \gets \text{ConstructReverseSwap}(S_{focal})$ 

\vspace{1.5mm}
\Statex \textbf{\textit{Phase 2: Three-Stage SizeOptimization}}
\State $s_{max} \gets \min(C_{max}, \text{VictimVolume} \times 2)$

\Statex {\textbf{Stage 1: Coarse Probe}}
\State Probe sizes exponentially up to $s_{max}$ to verify a valid profit range.

\Statex {\textbf{Stage 2: Ternary Search}}
\State $W \gets [1, s_{max}]$ \Comment{Search window}
\While{$|W| > 64$}
    \State $p_{left} \gets \text{SimulateProfit}(\text{1/3 point of } W)$
    \State $p_{right} \gets \text{SimulateProfit}(\text{2/3 point of } W)$
    \If{$p_{left} < p_{right}$}
        \State $W \gets \text{Right 2/3 of } W$
    \Else
        \State $W \gets \text{Left 2/3 of } W$
    \EndIf
\EndWhile

\Statex {\textbf{Stage 3: Exhaustive Local Scan}}
\State $s^*, p^* \gets \text{Scan } W \text{ to find exact optimum }s^*$
\If{$p^* \leq 0$} \Return $\emptyset$ \EndIf

\vspace{1.5mm}
\Statex \textbf{\textit{Phase 3: Atomic Execution}}
\State $Bundle \gets \text{AtomicSequence}[T_v \rightarrow S_{rev}(s^*)]$ 
\State $NetProfit \gets \text{ValidateOnFork}(Bundle) - \text{GasCost}$
\If{$NetProfit > 0$}
    \State \Return $\text{Execute}(Bundle)$
\EndIf
\State \Return $\emptyset$

\end{algorithmic}
\end{algorithm}
\newtheorem{theorem}{Theorem}
\newtheorem{proposition}{Proposition}
\section{Theoretical Analysis of SBA-HFT}
\label{app:SBA-HFT_theoretical_proofs}

\subsection{Problem Setup and Profit Model}

Imagine the liquidity pool right after a target user (the victim) has traded. We call this post-trade state $S$. If our bot injects an input amount $a$, it extracts $F_S(a)$ in output tokens. Given the external market price $p$ and the network gas cost $G(a)$, our net WETH profit $\Pi_S(a)$ is straightforward: Revenue minus Token Cost minus Gas.
\begin{equation}
  \Pi_S(a) = F_S(a) - p\,a - G(a).
  \label{eq:runtime-objective}
\end{equation}

Both SBA and SBA-HFT strategies share this exact profit formula. The only difference is \textit{where} they look for the best input $a$:
\begin{itemize}
    \item \textbf{SBA search set ($\mathcal{B}$):} Uses heuristic shortcuts. It only evaluates a handful of sizes, calculated as explicit percentages of the victim's output $v_{\mathrm{out}}$.
    \item \textbf{SBA-HFT search set ($\mathcal{H}$):} Scans comprehensively. It evaluates every possible integer amount up to a safe mathematical upper limit.
\end{itemize}
Since SBA‑HFT explores all possibilities, its search space $\mathcal{H}$ completely absorbs SBA's smaller search space $\mathcal{B}$ (meaning $\mathcal{B} \subseteq \mathcal{H}$ is always true).

\subsection{Search-Set Dominance}

\begin{theorem}[Search-set dominance]
\label{thm:search-set-dominance}
Let $a_H$ be the best size found by SBA-HFT in $\mathcal{H}$, and $a_B$ be the best size found by SBA in $\mathcal{B}$. SBA-HFT will always achieve at least the same WETH profit as SBA: $\Pi_S(a_H) \geq \Pi_S(a_B)$.
\end{theorem}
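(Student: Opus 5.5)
The plan is the standard maximization-over-nested-sets argument. The only real content is making precise that $a_H$ and $a_B$ are exact maximizers of the same objective over their respective sets. I would first fix the post-victim state $S$, the price $p$, and the gas function $G$. Both strategies then evaluate the identical function $\Pi_S$ from \eqref{eq:runtime-objective}, so any difference in outcome comes only from the feasible set. Next I would define the two maximizers formally:
\begin{align*}
a_B &\in \arg\max_{a\in\mathcal{B}} \Pi_S(a), \\
a_H &\in \arg\max_{a\in\mathcal{H}} \Pi_S(a).
\end{align*}
Both are well defined because $\mathcal{B}$ is a finite list of heuristic sizes. $\mathcal{H}$ is also finite, being the integers between $1$ and the upper limit $s_{\max}$, so each maximum is attained and no continuity or compactness argument is needed.

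The core step is the inclusion $\mathcal{B}\subseteq\mathcal{H}$. Given it, $a_B\in\mathcal{H}$, and by maximality of $a_H$ over $\mathcal{H}$ we get $\Pi_S(a_H)\ge \Pi_S(a_B)$ immediately. I would justify the inclusion by checking two facts:
\begin{itemize}
\item every SBA candidate, a percentage of $v_{\mathrm{out}}$, is rounded to an integer token amount, since EVM inputs are integers;
\item every such candidate is at most the SBA-HFT upper limit, i.e.\ the SBA multipliers are bounded by the cap $\min(C_{\max},2\cdot\text{VictimVolume})$.
\end{itemize}
If some SBA multiplier could exceed the cap, I would instead define $\mathcal{H}$ as the integer range up to $\max(s_{\max},\max\mathcal{B})$. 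Alternatively, I would restrict the claim to the capped SBA set and note that capped-out sizes would revert on-chain or be unprofitable anyway.

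The main obstacle is that the implemented SBA-HFT in Algorithm~\ref{alg:sba_hft} is not an exhaustive scan. It uses coarse probing, then a ternary search that discards parts of the window, and only then an exhaustive scan of the final window of width at most $64$. So the point it returns need not be the exact $\arg\max$ over $\mathcal{H}$ unless $\Pi_S$ is unimodal on $\mathcal{H}$. I would handle this in one of two ways.

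\textbf{(a)} Take the theorem as stated about the idealized search set $\mathcal{H}$, in which $a_H$ is the exact maximizer. The theorem's wording, ``best size found \dots in $\mathcal{H}$'', supports this reading.

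\textbf{(b)} If a statement about the implemented algorithm is wanted, make the result robust with a small modification: also evaluate every point of $\mathcal{B}$ and return the better of the search output and $a_B$. This costs $|\mathcal{B}|$ extra simulations and makes the inequality hold trivially.

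Which of these to use is what the final write-up will need to settle. Within (a), the alternative route is to prove unimodality of $\Pi_S$. For a CPMM, $F_S$ is concave, so $F_S(a)-pa$ is concave; with constant gas, ternary search then finds the integer optimum up to truncation effects, and the final local scan absorbs those. For a CLMM, however, unimodality is not automatic. The paper itself observes this failure for BUR-HFT, so I would not rely on this route in general.
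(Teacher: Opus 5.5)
Your proposal is correct and matches the paper's proof. The paper also reads $\mathcal{H}$ as the exhaustive integer range (your reading (a)) and concludes directly from $\mathcal{B}\subseteq\mathcal{H}$ that the maximum of $\Pi_S$ over the superset is at least its maximum over the subset. Your extra checks (finiteness, integrality, and the cap $s_{\max}$ needed for the inclusion) and your caveat that the ternary search in Algorithm~\ref{alg:sba_hft} is not an exhaustive scan are points the paper simply asserts or leaves implicit; they are not gaps in your argument.
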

\textit{Proof.} This follows trivially from the subset condition $\mathcal{B} \subseteq \mathcal{H}$, which guarantees that the maximum of $\Pi_S(a)$ over the superset $\mathcal{H}$ is greater than or equal to its maximum over the subset $\mathcal{B}$.

\subsection{Strict Profit Improvement}

\begin{proposition}[Strict dominance in real-world markets]
\label{prop:strict-dominance}
In live trading environments, SBA-HFT will almost always strictly outperform SBA: $\Pi_S(a_H) > \Pi_S(a_B)$.
\end{proposition}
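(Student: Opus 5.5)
The plan is to turn the informal ``almost always'' into a precise genericity statement and prove that. Theorem~\ref{thm:search-set-dominance} already gives $\Pi_S(a_H)\ge\Pi_S(a_B)$. Strictness can then fail only if the SBA optimum $a_B$ is also a global maximizer of $\Pi_S$ over $\mathcal{H}$, or at least ties one. So the proposition reduces to showing that the event ``some point of the finite ladder $\mathcal{B}$ attains $\max_{\mathcal{H}}\Pi_S$'' is non-generic. Here $\mathcal{B}=\{\lfloor c_k v_{\mathrm{out}}\rfloor\}_{k=1}^{K}$ is built from fixed multipliers $c_k$.

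First, I will characterize the maximizer over $\mathcal{H}$ using the continuous relaxation. On a CPMM the payout is $F_S(a)=\frac{y(1-f)a}{x+(1-f)a}$ with post-victim reserves $(x,y)$. It is strictly concave, and gas is (approximately) constant in $a$, so $\Pi_S$ is strictly concave with a unique maximizer
\[
a^\ast=\frac{\sqrt{x y(1-f)/p}-x}{1-f}.
\]
This interior point is positive whenever the victim has displaced the pool price beyond the fee band. On a CLMM the same holds piecewise within each tick interval, via the closed form $F(q)=L(\sqrt{P}-\sqrt{P'})$, with concave pieces that remain concave across ticks because marginal output is nonincreasing. In both cases the integer maximizer over $\mathcal{H}$ lies in $\{\lfloor a^\ast\rfloor,\lceil a^\ast\rceil\}$, and strict concavity makes every other integer strictly worse. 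Hence $\Pi_S(a_H)>\Pi_S(a_B)$ unless $a_B\in\{\lfloor a^\ast\rfloor,\lceil a^\ast\rceil\}$. The main care needed here is the EVM integer truncation; I would bound its effect by $O(1)$ wei and absorb it into the tie set.

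Second, I will show that the coincidence $a_B\in\{\lfloor a^\ast\rfloor,\lceil a^\ast\rceil\}$ occurs only on a measure-zero, or at least negligible, set of market states. The quantity $a^\ast$ is a smooth, nonconstant function of $(x,y,p)$, whereas each ladder point $c_k v_{\mathrm{out}}$ is a different function of the victim size. The coincidence set $\{a^\ast(x,y,p)=c_k v_{\mathrm{out}}\pm1\}$ is therefore a finite union of codimension-one hypersurfaces in the parameter space. Under any absolutely continuous distribution of the external price $p$ given the pool state, such as a price drawn from a continuous feed, its probability is zero in the continuum and $O(K/a^\ast)$ after integer discretization. That vanishing quantity is the formal meaning of ``almost always.''

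The main obstacle is the modeling step. The claim is not true in a worst-case sense: states where $a_B$ happens to be optimal, or where no profitable backrun exists and both profits are zero or negative, do occur. I would therefore state explicitly the assumptions of a positive interior optimum, a continuous distribution of $p$, and gas independent of $a$. I would also treat the degenerate nonprofitable case separately, since there both strategies return $\emptyset$ and equality holds trivially. For CLMM, the hard part is verifying global concavity across tick crossings. I would first try showing that $F_S'(a)$ is a nonincreasing price path, which holds because each swap only moves $\sqrt{P}$ monotonically.
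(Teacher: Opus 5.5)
Your argument is correct under the assumptions you state, and its core idea is the same as the paper's. In both, $a^\ast$ is determined by $S$ and $p$, the SBA candidates are fixed multiples of $v_{\mathrm{out}}$, and a collision between the two is non-generic. The paper's proof stays at the heuristic level: it asserts that $a^\ast$ and $v_{\mathrm{out}}$ are driven by independent forces, and it passes directly from $a^\ast\notin\mathcal{B}$ to $\Pi_S(a_H)>\Pi_S(a_B)$. You supply the two ingredients that step needs.

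First, strict concavity gives a unique maximizer. Without uniqueness, $a^\ast\notin\mathcal{B}$ does not exclude a tie with some other maximizer. Concavity also explains why the ternary search of Algorithm~\ref{alg:sba_hft} actually attains $\max_{\mathcal{H}}\Pi_S$.

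Second, you condition on the pool state and randomize only $p$. This replaces an independence claim that does not hold, since $S$ is the post-victim state. For a fee-free CPMM whose pre-victim price equals $p$, one computes $a^\ast=v_{\mathrm{out}}$ exactly, so SBA's $1\times$ rung is optimal. Your formulation still assigns this event probability zero, because it requires one specific value of $p$.

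The cost is that your probability step needs $a^\ast$ to move strictly with $p$. On a CLMM this can fail. A per-tick gas charge, or a zero-liquidity gap where $F_S'$ jumps, pins the maximizer at a tick-boundary input for a whole interval of $p$. In those cases you must also randomize $S$ or $v_{\mathrm{out}}$.

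Likewise, ``interior'' must mean $a^\ast<\max\mathcal{H}$, not just $a^\ast>0$. In Algorithm~\ref{alg:sba_hft} the cap is itself a multiple of the victim volume ($2\times$ when capital is unbounded). A binding cap therefore ties outright with any SBA rung at that multiple.

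Finally, an $O(1)$-wei truncation error does not keep the tie set at $\{\lfloor a^\ast\rfloor,\lceil a^\ast\rceil\}$. Near $a^\ast$ the profit is flat to within one wei over a window of width of order $|\Pi_S''(a^\ast)|^{-1/2}$ input units, which can span many integers. Your $O(K/a^\ast)$ estimate should therefore carry that width as a factor, though the conclusion that ties are negligible survives.
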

\eat{
\noindent\textit{Proof.} Let $a^* = \arg\max_{a} \Pi_S(a)$ be the theoretical optimum, which is strictly a function of the liquidity state $S$ and external price $p$, mathematically independent of the victim's volume $v_{\mathrm{out}}$. Because the SBA strategy restricts its search space to heuristic multiples of the victim's trade ($\mathcal{B} = \{ c \cdot v_{\mathrm{out}} \mid c \in C \}$), the probability that $a^* \in \mathcal{B}$ is negligibly small, as $a^*$ and $v_{\mathrm{out}}$ are driven by independent stochastic processes. Consequently, $a^* \notin \mathcal{B}$ almost surely. Since SBA-HFT's exhaustive superset $\mathcal{H}$ captures this optimum, it reliably identifies profitable sizes outside SBA's blind spots, yielding $\Pi_S(a_H) > \Pi_S(a_B)$.

Despite this, SBA arbitrarily restricts its search entirely to simple fractions of $v_{\mathrm{out}}$ (e.g., matching 100\% or 150\% of the victim's trade). For SBA to blindly hit the exact optimal size, a random percentage of a random user's trade would have to perfectly collide with the market-driven mathematical peak $a^*$. Because these variables are driven by totally independent market forces, the probability of this exact collision is effectively zero.

Consequently, the true peak $a^*$ is almost never in SBA's limited playbook ($a^* \notin \mathcal{B}$). SBA-HFT's exhaustive search will therefore step beyond SBA's blind spots and reliably capture a strictly higher WETH profit. (This same logic applies directly to CLMMs like Uniswap V3, where optimal sizing depends on tick architecture, not the victim's volume.)
}
\noindent\textit{Proof.} Let $a^* = \arg\max_{a} \Pi_S(a)$ denote the theoretical optimal input size. Its value depends strictly on the post-trade liquidity state $S$ and external market price $p$, remaining mathematically decoupled from the victim's trade volume $v_{\mathrm{out}}$. 


Conversely, the SBA strategy restricts its candidate search space to heuristic multiples of the victim's transaction, formalized as $\mathcal{B} = \{ c \cdot v_{\mathrm{out}} \mid c \in C \}$.  For SBA to blindly hit the exact optimal size, a random percentage of a random user's trade would have to perfectly collide with the market-driven mathematical peak $a^*$. Because these variables are driven by totally independent market forces, the probability of this exact collision is effectively zero.

Consequently, the true peak $a^*$ is almost never in SBA's limited playbook ($a^* \notin \mathcal{B}$). SBA-HFT's exhaustive search will therefore step beyond SBA's blind spots and reliably capture a strictly higher WETH profit. (This same logic applies directly to CLMMs like Uniswap V3, where optimal sizing depends on tick architecture, not the victim's volume.)
\eat{
\noindent\textit{Proof.} Let $a^* = \arg\max_{a} \Pi_S(a)$ be the theoretical optimum, which is strictly a function of the liquidity state $S$ and external price $p$, mathematically independent of the victim's volume $v_{\mathrm{out}}$. Because the SBA strategy restricts its search space to heuristic multiples of the victim's trade ($\mathcal{B} = \{ c \cdot v_{\mathrm{out}} \mid c \in C \}$), the probability that $a^* \in \mathcal{B}$ is negligibly small, as $a^*$ and $v_{\mathrm{out}}$ are driven by independent stochastic processes. Consequently, $a^* \notin \mathcal{B}$ almost surely. Since SBA-HFT's exhaustive superset $\mathcal{H}$ captures this optimum, it reliably identifies profitable sizes outside SBA's blind spots, yielding $\Pi_S(a_H) > \Pi_S(a_B)$. This same logic applies directly to CLMMs like Uniswap V3, where optimal sizing depends on tick architecture rather than the victim's volume.
}
\section{Tick-Level State Transitions in SBA-HFT}
The victim moves upward through the 11 initialized ticks 197710, 197720,
\ldots, 197810.  Both backruns start at tick 197817 and first cross the same
three ticks 197810, 197800, 197790.  SBA stops at tick 197787.  SBA-HFT
continues down to tick 197600, traversing eighteen additional initialized
ticks (21 in total).  Table~\ref{tab:all-crossings} lists every initialized
tick crossed by each swap together with its liquidity net; these are the
values required to replay the state transitions independently.

\begin{table}[htbp]
\centering
\caption{All initialized ticks crossed, with liquidity net. The shared
backrun prefix is 197810, 197800, 197790; the SBA-HFT only complementary ticks begins at
197780.}
\label{tab:all-crossings}
\sisetup{
  scientific-notation = true,
  round-mode = places,
  round-precision = 4,       
  table-format = -1.4e2       
}
\resizebox{\columnwidth}{!}{%
\begin{tabular}{r S @{\quad} r S @{\quad} r S}
\toprule
\multicolumn{2}{c}{\textbf{Victim (11)}} & \multicolumn{2}{c}{\textbf{SBA (3)}} & \multicolumn{2}{c}{\textbf{SBA-HFT extra (18)}} \\
{\textbf{Tick}} & {\textbf{Liquidity net}} & {\textbf{Tick}} & {\textbf{Liquidity net}} & {\textbf{Tick}} & {\textbf{Liquidity net}} \\
\midrule
197710 & -119461249973742398  & 197810 & -31362008543507   & 197780 & -383366919250281 \\
197720 & -1081438842953574    & 197800 & 1286992685625811  & 197770 & -203362073447790 \\
197730 & 2229836423358469383  & 197790 & -20297835264190   & 197760 & -1115322058013891 \\
197740 & -2227345450299561462 &        &                   & 197750 & -2780159707162433 \\
197750 & -2780159707162433    &        &                   & 197740 & -2227345450299561462 \\
197760 & -1115322058013891    &        &                   & 197730 & 2229836423358469383 \\
197770 & -203362073447790     &        &                   & 197720 & -1081438842953574 \\
197780 & -383366919250281     &        &                   & 197710 & -119461249973742398 \\
197790 & -20297835264190      &        &                   & 197700 & -422782575204817 \\
197800 & 1286992685625811     &        &                   & 197690 & -449999530677936423 \\
197810 & -31362008543507      &        &                   & 197680 & 683986602455997 \\
       &                      &        &                   & 197670 & -4376644206972401 \\
       &                      &        &                   & 197660 & 270485275060007 \\
       &                      &        &                   & 197650 & 4126567458044491 \\
       &                      &        &                   & 197640 & 197309977538567 \\
       &                      &        &                   & 197630 & 162995672188181 \\
       &                      &        &                   & 197620 & -1708193886200303 \\
       &                      &        &                   & 197610 & 167583061487984 \\
\bottomrule
\end{tabular}%
}
\end{table}

\subsection{Replay of the SBA-HFT Backrun}
\label{sec:worked-replay}

This subsection recomputes the SBA-HFT backrun from
Table~\ref{tab:all-crossings} alone, using only the within-tick invariant of
Eq.~\ref{eq:clmm-output} and the tick-crossing liquidity update.  All
quantities are raw on-chain integers; $\sqrt{P}$ denotes \texttt{sqrtPriceX96}
and $2^{96}$ is the fixed-point scale.  Because USDC has 6 decimals
and WETH has 18, the on-chain $\sqrt{P}$ already embeds the raw-unit
price ratio, so the tick math below uses $L$ directly with no additional
decimal scaling.

\paragraph{Per-interval step.}
For a (USDC in, WETH out) exact-input swap moving the price
downward inside one tick interval with active liquidity $L$ and current
square-root price $s=\sqrt{P}$, the next initialized tick boundary is
$s_{\mathrm{next}}=\sqrt{1.0001^{\,t_{\mathrm{next}}}}\cdot 2^{96}$.  The net
input required to reach that boundary and the WETH output of traversing the
whole interval are
\begin{equation}
\Delta a_{\mathrm{net}}
  = \frac{L\,2^{96}\,(s-s_{\mathrm{next}})}{s\,s_{\mathrm{next}}},
\qquad
\Delta y
  = \frac{L\,(s-s_{\mathrm{next}})}{2^{96}}.
\label{eq:interval-step}
\end{equation}
Charging the $f=0.05\%$ fee on input, the gross input for the full interval is
$\Delta a_{\mathrm{gross}}=\lceil \Delta a_{\mathrm{net}}/(1-f)\rceil$.  If the
remaining input is smaller, the swap ends inside the interval at
\begin{equation}
s' = \frac{L\,2^{96}\,s}{L\,2^{96}+(1-f)\,\Delta a_{\mathrm{in}}\,s},
\qquad
\Delta y = \frac{L\,(s-s')}{2^{96}}.
\label{eq:partial-step}
\end{equation}
On crossing tick $t$ while moving downward, liquidity updates as
$L \leftarrow L - \mathrm{liquidityNet}(t)$.

\paragraph{Substitution check.}
Start from the post-victim state $\sqrt{P}$ and $L$, with input $1{,}904{,}210$ USDC. Applying Eq.~\ref{eq:interval-step} interval by interval:

\begin{table}[htbp]
\centering
\caption{Replay of the SBA-HFT backrun. Initial post-victim state: $\sqrt{P}=1563942608515956339823549149976335$, $L=3097836570594546384$, tick $=197817$. Step~0 is the starting state; steps~1--21 cross the listed initialized ticks; step~22 is the final partial interval ending at tick~197600.}
\label{tab:hft-replay-steps}
\resizebox{\columnwidth}{!}{%
\begin{tabular}{
  r 
  r 
  S[input-ignore={,}, scientific-notation=true,  round-mode=places, round-precision=4, table-format=1.2e1] 
  S[input-ignore={,}, scientific-notation=false, round-mode=places, round-precision=4, table-format=2.3] 
  S[input-ignore={,}, scientific-notation=true,  round-mode=places, round-precision=4, table-format=1.4e2]
}
\toprule
\textbf{Step} & \textbf{Cross tick} & {\textbf{Interval input (USDC)}} & {\textbf{Interval output (WETH)}} & {\textbf{Liquidity after}} \\
\midrule
1  & 197810 & 60,174.545853  & 23.426730 & 3097867932603089891 \\
2  & 197800 & 78,552.886057  & 30.554659 & 3096580939917464080 \\
3  & 197790 & 78,559.519659  & 30.526699 & 3096601237752728270 \\
4  & 197780 & 78,599.322484  & 30.511640 & 3096984604671978551 \\
5  & 197770 & 78,648.365664  & 30.500165 & 3097187966745426341 \\
6  & 197760 & 78,692.864704  & 30.486921 & 3098303288803440232 \\
7  & 197750 & 78,760.571105  & 30.482655 & 3101083448510602665 \\
8  & 197740 & 78,870.667798  & 30.494757 & 5328428898810164127 \\
9  & 197730 & 135,587.101394 & 52.371350 & 3098592475451694744 \\
10 & 197720 & 78,886.157015  & 30.439808 & 3099673914294648318 \\
11 & 197710 & 78,953.153786  & 30.435212 & 3219135164268390716 \\
12 & 197700 & 82,037.009881  & 31.592384 & 3219557946843595533 \\
13 & 197690 & 82,088.816245  & 31.580739 & 3669557477521531956 \\
14 & 197680 & 93,609.209522  & 35.976806 & 3668873490919075959 \\
15 & 197670 & 93,638.566488  & 35.952120 & 3673250135126048360 \\
16 & 197660 & 93,797.153594  & 35.977016 & 3672979649850988353 \\
17 & 197650 & 93,837.151203  & 35.956385 & 3668853082392943862 \\
18 & 197640 & 93,778.601044  & 35.898036 & 3668655772415405295 \\
19 & 197630 & 93,820.453812  & 35.878162 & 3668492776743217114 \\
20 & 197620 & 93,863.202964  & 35.858635 & 3670200970629417417 \\
21 & 197610 & 93,953.872193  & 35.857400 & 3670033387567929433 \\
22 & (partial, end tick 197600) & 85,501.648427 & 32.600477 & 3670033387567929433 \\
\midrule
\multicolumn{2}{r}{\textbf{Total}} & \textbf{1,904,210.840892} & \textbf{733.358755} & \multicolumn{1}{c}{} \\
\bottomrule
\end{tabular}%
}
\end{table}

The first 21 rows exhaust the listed crossings; the remaining $85{,}501$ USDC is consumed by Eq.~\ref{eq:partial-step} in the final interval and ending at tick $197600$.  Summing the interval outputs gives $733.36$WETH.

\begin{table}[hbtp]
\centering
\caption{Profit distribution of JIT-HFT on CLMM. \texttt{P25--P75} denotes the interquartile range of per-case capital cost.}
\label{tab:jit-hft-v3-profit-dist}
\scriptsize
\setlength{\tabcolsep}{2.5pt}
\begin{tabularx}{\columnwidth}{@{} l r r r r r @{}}
\toprule
\textbf{Profit bucket} & \textbf{Cases} & \textbf{Share} & \textbf{Avg. capital} & \textbf{Capital P25--P75} & \textbf{Profit share} \\
\textbf{(WETH)} & & \textbf{(\%)} & \textbf{(WETH)} & \textbf{(WETH)} & \textbf{(\%)} \\
\midrule
\midrule
$< 0.0001$     & 817{,}868 & 43.63 &    1.25 & $0.006$--$0.223$      &  0.04 \\
$0.0001$--$0.001$ & 358{,}863 & 19.14 &    5.41 & $0.202$--$1.891$      &  0.46 \\
$0.001$--$0.01$   & 404{,}901 & 21.60 &   12.20 & $0.812$--$7.458$      &  4.85 \\
$0.01$--$0.1$     & 239{,}251 & 12.77 &   34.96 & $3.967$--$32.133$     & 24.18 \\
$0.1$--$1$        &  49{,}882 &  2.66 &  114.55 & $21.332$--$126.026$   & 39.54 \\
$1$--$10$         &   3{,}526 &  0.19 &  720.56 & $122.511$--$548.633$  & 24.82 \\
$\geq 10$         &       111 &  0.01 & 4{,}259.29 & $912.687$--$3{,}705.166$ &  6.10 \\
\bottomrule
\end{tabularx}
\end{table}

\begin{table}[H]
\centering
\small
\caption{Primary JPlag Matching Regions}
\label{tab:matching_regions}
\begin{tabularx}{\columnwidth}{l X}
\toprule
\textbf{Strategy} & \textbf{Corresponding Structure} \\
\midrule
HFT & Ternary search after the front-run upper bound, local scan, optimal profit selection, and search trace. \\
\addlinespace
JIT & JIT data collector, planner, and validator. Inspection of liquidity seed/valuation results, organization of the add liquidity and remove liquidity actions. \\
\addlinespace
SBA & SBA planner. Single-victim opportunity normalization, same-pool back-run sizing, execution-ready/rejection states. \\
\addlinespace
JIT-HFT & Ternary search within a restricted interval, simulation success gate, sizing fields, and plan workflow. Protocol-specific parts (V3 tick/range, Balancer liquidity ratio) are still implemented within each adaptation. \\
\addlinespace
SBA-HFT & SBA-HFT planner containing probe upper bound, ternary search, local scan, optimal back-run selection, and workflow summary. This family has high JPlag but zero JSCPD, which perfectly exemplifies that the internal logic are preserved, while protocol implementation rewritten. \\
\bottomrule
\end{tabularx}
\end{table}

\section{Case Category Definitions}
\label{four_case_categories}
We distinguish four outcome categories along the three-stage MEV pipeline. 
A candidate is a structured record of one or more historical transactions and the corresponding block-level state that matches a predefined MEV strategy pattern and is admitted by the data collector (Tab.~\ref{fig:agent_loop_full}) for downstream processing. A candidate becomes planned only when the planner constructs a complete executable plan containing the required transactions, routes, parameters, and execution constraints. Runtime success is recorded only when the Validator successfully simulates that plan at the historical fork point. Finally, profit found is a strictly stronger condition: the validator must deterministically normalize the realized revenue, subtract all gas costs, and confirm a realized net profit greater than zero.

Each planned candidate is executed on an isolated anvil fork at the corresponding historical block. Profitability is computed from attacker asset-balance deltas reducing gas costs, and finally normalized to the respective native token of each chain (i.e., ETH for Ethereum and Base, BNB for BSC). The historical daily token prices are derived from the CoinGecko API and locally saved as a database. Missing historical prices are marked as uncomputable rather than being treated as zero or negative profit.

\end{document}